\newtheorem{theorem}{Theorem}[section]
\newtheorem{lemma}[theorem]{Lemma}
\newtheorem{definition}[theorem]{Definition}
\newtheorem{remark}[theorem]{Remark}
\numberwithin{equation}{section}
\newcommand{\ip}[2]{\left\langle#1,#2\right\rangle}
\DeclareMathOperator*{\E}{\mathbb{E}}
\def \R {\mathbb{R}}
\def \one {{\bf 1}}
\def \moo| {\langle}
\def \< {\langle }
\def \> {\rangle }
\def \^ {\widehat}
\newcommand{\norm}[1]{\left \|#1\right \|}
\newcommand{\mat}[1]{\bm{#1}}
\def \E {\mat{E}}
\def \R {\mat{R}}
\def \X {\mat{X}}
\def \X {\mathcal{X}}
\def \bigO {\mathcal{O}}
\def \real {\mathbb{R}}
\def \integer {\mathbb{Z}}
\def \sphere {\mathbb{S}}
\newcommand{\score}[2]{S\left(#1, #2\right)}
\title{Random mappings designed for commercial search engines}
\author{%
Roger Donaldson\thanks{Department of Mathematics, The University of British
Columbia}, %
Arijit Gupta\thanks{Department of Mathematics, Imperial College London}, %
Yaniv Plan\footnotemark[1] \thanks{YP is partially supported by NSERC grant 22R23068}, and %
Thomas Reimer\footnotemark[1]%
}
\begin{document}
\maketitle


\begin{abstract}
We give a practical random mapping that takes any set of documents
represented as vectors in Euclidean space and then maps them to a sparse subset
of the Hamming cube while retaining ordering of inter-vector inner products.  Once
represented in the sparse space, it is natural to index documents using
commercial text-based search engines which are specialized to take advantage of
this sparse and discrete structure for large-scale document retrieval.  We give
a theoretical analysis of the mapping scheme, characterizing exact asymptotic
behavior and also giving non-asymptotic bounds which we verify through
numerical simulations.  We balance the theoretical treatment with several
practical considerations; these allow substantial speed up of the method.  We
further illustrate the use of this method on search over two real data sets: a
corpus of images represented by their color histograms, and a corpus of daily
stock market index values.
\end{abstract}

\section{Introduction}

Rapid document retrieval is a basic problem in modern large-scale computation.
One is given a corpus of documents and a query, and the goal is to find
documents in the corpus that nearly match the query.  We consider the case where the query is itself a document and thus our problem is closely related to \textit{approximate nearest neighbor search}.  While this problem becomes challenging in the high-dimensional setting, in the special case of text documents it has been addressed, in practice, with success.  Text documents may be represented as ``bags of
words'' in which each document is represented as a vector, whose length is the size $M$
of the dictionary, and whose $i$-th entry contains the number of times word
$i$ appears.\footnote{For example, the first sentence of the introduction would
be represented as a vector with 12 entries equal to one and all other entries
equal to zero.}  Commercial search engines have been built to take advantage of
this sparse and discrete structure for large scale search.

However, many documents of interest are best represented as vectors in
$\real^d$, which are not amenable to large scale search in their original form.
A small set of examples includes images, video, and audio; biometric data such
as fingerprints and iris scans; and bioinformatic data such as gene expressions,
enzyme activities, and protein libraries.  Given the high degree to which
text-based search has been implemented, including the consequential familiarity
of software engineers for text-based search infrastructure, we give a way to
adapt this familiar infrastructure to search in $\real^d$.

Naive search for objects in $\real^d$ for $d \gtrsim 50$ is prohibitively
slow when the corpus is large.  Indeed, as stated in~\cite{andoni2008near-optimal}, ``Despite decades of
intensive effort, the current solutions suffer from either space, or query time,
that is exponential in dimension $d$''.  See~\cite{samet2005foundations} for an
overview. A typical solution is to apply a dimension-reduction technique, such
as that espoused by Johnson Lindenstrauss Lemma~\cite{JL_lemma}, which maps the
data into a lower dimensional Euclidean space.  While this has a beautiful
theoretical backing, in practice Euclidean space is not ideally suited for
search.  Many other mapping strategies have been invented, in particular
the \textit{locality-sensitive hashing} used in approximate nearest neighbour
search \cite{indyk1998approximate, gionis1999similarity}.  These give
theoretically appealing low-complexity and low-storage guarantees, but are
typically dissimilar in operation to text-based search, so cannot leverage
existing search infrastructure, nor be naturally combined with text-based
search.  In this paper, we give an easy to use practical mapping which puts
the data precisely in the form used by commercial text search engines, namely in
high-dimensional but \textit{sparse} subsets of $\integer^m$ or even of $\{0,
1\}^m$.  Mapping real valued vectors to discrete measurements is also the approach of locality sensitive hashing.  However, in this paper we emphasize mapping to a sparse set to put the data into the same representation as a text document.  We believe this falls outside of the perspective considered in prior literature.

\subsection{Random mapping scheme}

Although proving that our random mapping scheme works is involved, the scheme
is remarkably simple.  Our corpus $\X$ is a finite collection of vectors in
$\real^d$, normalized to have unit $\ell_2$ norm.  To transform each vector in
$\X$, multiply each vector by a random matrix, then threshold each element.  We
now formalize this procedure.

Introducing notation we will use throughout, let $a_1, a_2, \hdots, a_m$ be
standard normal random vectors of length $d$.  Fix $h > 0$.  Map $x \in\X
\subset \sphere^{d-1}$ to the Hamming cube $\{0,1\}^m$ as follows 
\begin{equation}
    x \rightarrow (\one_{[\ip{a_i}{x} \geq h]})_{i =1}^m.
\end{equation}
Above, $\one_{[\ip{a_i}{x} \geq h]}$ is equal to $1$ if $\ip{a_i}{x} \geq h$ and $0$ otherwise.  Note that $x$ is mapped to a sparse vector provided $h$ is
large enough.  

After indexing each document in this manner, we search by performing the same
transformation to query vector $y \in \sphere^{d-1}$.  We then take inner
products in the Hamming cube to determine the best match.  Thus, we return $x
\in \X$ to the user in decreasing order of score, \begin{equation}
    \score{x}{y} = \frac{1}{m}\sum_{i=1}^m \one_{[\ip{a_i}{x} \geq h]} \one_{[\ip{a_i}{y} \geq h]}.
    \label{eq:score}
\end{equation}
The bulk of this paper is devoted to proving and demonstrating that with
appropriate choices for parameters $h$ and $m$, this procedure returns $x \in
\X$ closest, or nearly closest, to query $y$ with high probability.

The astute reader will already observe that the transformation itself is
$\bigO(m d)$:  Although the theory is straightforward to introduce with $a_i$
Gaussian, we elaborate in the section on practical considerations,
Section~\ref{s:practical}, on an alternate choice for the random transformation
that is $\bigO(m \log m)$.  We show through simulations that this alternate strategy gives identical performance to that observed with a Gaussian transform.

The particular case where $h=0$, correponding with measurements of the form sign($\ip{a_i}{x}$), was introduced as a method for locality-sensitive hashing in \cite{Charikar2002}, and is well studied in general.  Similarly,
other authors~\cite{Jiang2015, KulisB.aGrauman2009} threshold to $h=0$, but
replace $\ip{a_i}{x}$ with an inner product with respect to a kernel tailored to
the corpus $\X$ in order to gain greater fidelity in the hashed (transformed)
space.  Assuming $\X \subset \sphere^{d-1}$, the vanilla sign mapping has been shown to be a near isometry from the sphere with geodesic distance to the Hamming cube, even in the case when $\X$ has an infinite collection of elements \cite{plan2014dimension}.  Thus, it is a very effective locality-sensitive hash.  Further, this near-isometric property is pivotal in \textit{1-bit compressed sensing} \cite{boufounos20081, plan2013robust}.  In contrast, we focus on the case when $h$ is much greater than 0.  We show that, while the mapping is not a near isometry, it still does preserve ordering of inner products, i.e., if $\ip{x}{y} > \ip{z}{y}$ then, with high probability, $\score{x}{y} > \score{z}{y}$.  This relationship is sufficient for effective document search and by relaxing the need for near-isometry we gain sparsity and thus improve search speed.
As it happens, our choice of $h>0$ also provides increased
sensitivity to distinguishing the nearly-similar items we in practice expect to
find among the top search results.

\subsection{Notation}

$\sphere^{d-1}$ refers to the Euclidean sphere in $d$ dimensions; the corpus
$\X \subset \sphere^{d-1}$ is a set of documents to be indexed (we assume each
document has been normalized);  $x \in \X$, $y \in \sphere^{d-1}$ always refer
to a corpus document, and a user query, respectively; $\mu(\lambda) = \E S(x,y)$
is the expected score of query $y$ against document $x$ given their true
similarity as their inner product $\lambda = \ip{x}{y}$; the vocabulary size
$m$ is a large parameter that we control.

$\Phi(t) = P(N(0,1) < t)$ is the cumulative distribution function of a standard
normal random variable;  $a_1, a_2, \hdots, a_m$ are standard normal vectors,
each of length $d$; the threshold value $h$ is parameterized by a scalar $r$ so
that $h = h(m,r) := \sqrt{2r \log m}$.

For two sequences of numbers, $b_1, b_2, \hdots$ and $\beta_1, \beta_2, \hdots$,
we say that $b_m$ is \textit{asymptotically equivalent} to $\beta_m$, denoted
$b_m \sim \beta_m$, if and only if \[b_m = \beta_m(1 + o(1))\]
where $o(1) \rightarrow 0$ as $m \rightarrow \infty$.  If $\beta_m \neq 0$ for each $m$, then $b_m \sim \beta_m$ if and only if
\[\lim_{m \rightarrow \infty} \frac{b_m}{\beta_m} \rightarrow 1.\] Note that
this (standard) equivalence relationship is preserved under addition,
multiplication, and division. 


\section{Problem definition}
\label{sec: definition}

As previously noted, there exists a great body of literature addressing nearest
neighbor and approximate nearest neighbor search.  We broaden this perspective
slightly to address the more practical problem of producing a list of results,
ordered by decreasing relevance.  We therefore introduce notions of relevance,
retrieval, and errors in order to use standard information retrieval performance
metrics (see \cite{manning2008introduction} for an overview).

Recall that $\mu(\lambda) := \E S(x,y)$ provided $\ip{x}{y} = \lambda$.  Thus,
with a goal of retrieving documents whose inner product with $y$ is at least
$\lambda$, we will return any document whose score exceeds $\mu(\lambda)$.

\begin{definition}[Document sets] Fix $\lambda \in [-1,1]$, $y \in \sphere^{d - 1}$, and $x \in \X$.  We call the document $x$ \emph{retrieved} if and only if $\score{x}{y}\geq\mu(\lambda)$.  We call the document $x$ \emph{relevant}  if and only if
        $\ip{x}{y}>\lambda$; a document that is not relevant is \emph{irrelevant}.
        \end{definition}

While purely relevant documents are appealing, in practice the mapping will
cause some (small) error.  Thus, we replace the notion of relevant with
\textit{$\epsilon$-relevant}.

\begin{definition}[Document sets with small error] Fix $\lambda \in [-1,1],
    \epsilon > 0$, $y \in \sphere^{d - 1}$, and $x \in \X$.   We call the document $x$ \emph{$\epsilon$-relevant} if and only if $\ip{x}{y} \geq \lambda + \epsilon$; we call the document $x$ \emph{$\epsilon$-irrelevant} if and only if $\ip{x}{y} \leq \lambda - \epsilon$.
    \end{definition}

Borrowing notions from hypothesis testing, we identify two important events:
\begin{definition}[Error events]  We define two types of events:  If an $\epsilon$-irrelevant document is retrieved this is called a \emph{type I error}.  If an $\epsilon$-relevant document is not retrieved this is called a \emph{type II error}.
\end{definition}

\noindent See Figure \ref{figure: error_diagram} for an illustration.
\begin{figure}
\centering
\includegraphics[width = .7 \textwidth]{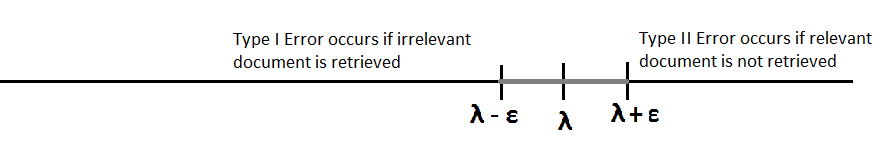}
\caption{Diagram of kinds of errors.  In the area between $\lambda - \epsilon$
    and $\lambda + \epsilon$, we do not expect documents to be returned
    reliably.  We give a precise characterization of the size of $\epsilon$
    required to keep the probability of type I errors and type II errors small
    in Section \ref{sec: main theory}}.  \label{figure: error_diagram}
\end{figure}

The practitioner will have two main goals:
\begin{enumerate}
    \item Minimize the complexity of retrieval
        in space (precomputation, also called indexing) and time (search) by making
        the mapping as sparse as possible; and
    \item Minimize $\epsilon$ while still
        controlling the probability of type I errors and type II errors.
\end{enumerate}
Not surprisingly, there is a tradeoff between those two goals. Our main
theoretical contribution, given in the next section, is to precisely
characterize the sparsity and the size of $\epsilon$ as a function of adjustable
parameters and given the desired bound on type 1 errors and type II errors.


\section{Theoretical results}
\label{sec: theory}

This section states results concerning sparsity of our mapping and
asymptotic and non-asymptotic statements of its accuracy.  These results
anticipate the roles of parameters $h$ and $m$ in the tradeoff between accuracy
and sparsity (hence, complexity) elucidated in Section~\ref{s:practical}.

\subsection{Sparsity of the map}
\label{sec: sparsity}

We begin by controlling the expected sparsity of mapped vectors.  Recall that
each vector $x \in \X$ is mapped to the sequence $\one_{[\ip{a_i}{x} > h]}, i =
1, 2, \hdots, m$.  Observe that $\ip{a_i}{x}$ is standard normal, and thus the
expected number of non-zero components of a mapped vector is $m P(N(0,1) \geq h)
= m (1 - \Phi(h)).$  Recall that $h = \sqrt{2 r \log m}$.  For large $m$, which
implies large $h$, one may use the classic \cite{gordon1941values} approximation
\begin{equation}
    \label{eq: Gaussian tail}
P(N(0,1) \geq h) \sim \frac{e^{-\frac{h^2}{2}}}{h \sqrt{2 \pi}} = \frac{m^{-r}}{\sqrt{4 \pi r \log m}}.
\end{equation}  
Note that the right-hand side is not only asymptotically accurate, but is also a non-asymptotic upper bound on $P(N(0,1) \geq h)$ \cite{gordon1941values}.  

To conclude, let $k$ be the number of non-zero entries in a mapped vector.  Then $k$ satisfies
\begin{equation}\label{eq: expected non-zeros in map}
\E \, k = m (1 - \Phi(\sqrt{2 r \log m})), \qquad \E \,k \sim \frac{m^{1-r}}{\sqrt{4 \pi r \log m}}, \qquad \text{and} \qquad \E \,k \leq \frac{m^{1-r}}{\sqrt{4 \pi r \log m}}.
\end{equation}
Thus, one sees that after ignoring logarithmic factors there are roughly $m^{1-r}$ non-zero entries.  In fact, it is for this simple expression, and similar expressions below, that we parameterize $h$ as $h = \sqrt{2 r \log m}$.  Indeed, in Lemma \ref{lem: Gaussian score} below, we see that $r$ naturally characterizes a phase transition.

We now proceed to the more significant part of the theory: characterizing the size of the error.  This will come from controlling the concentration of the score around its mean.
We begin in the asymptotic regime to give an exact result and a useful intuition.

\subsection{Asymptotic characterization of the score}
We begin with the following lemma, concerning the asymptotic normality of $S(x,y)$.

\begin{lemma} \label{lem: Gaussian score}
Fix $x, y \in \sphere^{d-1}$ with $\ip{x}{y} = \lambda < 1$.  Let $\mu = \mu(\lambda)$ and $\sigma^2 = \sigma^2(\lambda)$ be, respectively, the expectation and variance of $\one_{[\ip{a_i}{x}]} \one_{[\ip{a_i}{y}]}$.  Depending on $\lambda$, there are two cases to consider.

\noindent{\bf Case 1:}  $\lambda < 2r - 1$.  

Then 
\[P(S(x,y) \neq 0) \leq \mu(\lambda) m \rightarrow 0 \qquad \text{as} \quad m \rightarrow \infty.\]

\noindent{\bf Case 2:}  $\lambda > 2r - 1$.

The normalized score
\begin{equation}\label{eq: tilde S}
\tilde{S}(x,y) := \frac{\sqrt{m}(S(x,y) - \mu)}{\sigma},
\end{equation}
convergences to a standard normal random variable in distribution as
$m\rightarrow \infty$.

\end{lemma}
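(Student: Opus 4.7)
For each fixed $m$, write $S(x,y) = \tfrac{1}{m}\sum_{i=1}^{m} Z_i$ with $Z_i := \one_{[\ip{a_i}{x}\geq h]}\one_{[\ip{a_i}{y}\geq h]}$; for fixed $m$ the $Z_i$ are i.i.d.\ Bernoulli with parameter $\mu(\lambda) = P(\ip{a_1}{x}\geq h,\ \ip{a_1}{y}\geq h)$, but because $h=\sqrt{2r\log m}$ depends on $m$ we are really dealing with a triangular array. The central preliminary step is to pin down the asymptotics of $\mu(\lambda)$. Since $(\ip{a_1}{x},\ip{a_1}{y})$ is centered jointly Gaussian with unit variances and correlation $\lambda$, a Laplace-type expansion of the joint density at the corner $(h,h)$ of the quadrant yields the bivariate Mills ratio asymptotic
\[
\mu(\lambda) \;\sim\; \frac{(1+\lambda)^{3/2}}{2\pi\,h^{2}\sqrt{1-\lambda}}\,\exp\!\left(-\frac{h^{2}}{1+\lambda}\right).
\]
Substituting $h^{2}=2r\log m$ gives $m\mu(\lambda) \asymp (\log m)^{-1}\, m^{\,1-2r/(1+\lambda)}$, which vanishes when $\lambda<2r-1$ and diverges when $\lambda>2r-1$. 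This single computation drives both cases and places the phase transition precisely at $\lambda=2r-1$.

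Case 1 is then immediate from a union bound:
\[
P\!\left(S(x,y)\neq 0\right) \;\leq\; \sum_{i=1}^{m} P(Z_i=1) \;=\; m\,\mu(\lambda),
\]
and the right-hand side tends to $0$ by the asymptotic above.

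For Case 2, I would apply the Lyapunov CLT for triangular arrays to $Y_i := Z_i - \mu$. Since $|Y_i|\leq 1$ we have $E|Y_i|^{3}\leq E Y_i^{2}=\sigma^{2}$, so the Lyapunov ratio satisfies
\[
\frac{\sum_{i=1}^{m} E|Y_i|^{3}}{(m\sigma^{2})^{3/2}} \;\leq\; (m\sigma^{2})^{-1/2}.
\]
Because $\sigma^{2}=\mu(1-\mu)\sim\mu$, the hypothesis $\lambda>2r-1$ forces $m\sigma^{2}\to\infty$, so the Lyapunov ratio vanishes and the triangular-array CLT delivers $\tilde S(x,y)\Rightarrow N(0,1)$.

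The main technical obstacle is the bivariate Gaussian tail estimate above. A naive rectangle bound uses only the univariate $P(N(0,1)\geq h)$ and produces the wrong exponent $-h^{2}$ instead of $-h^{2}/(1+\lambda)$; obtaining the correct exponent requires either rotating $(\ip{a_1}{x},\ip{a_1}{y})$ into the principal axes $(1,1)/\sqrt{2}$ and $(1,-1)/\sqrt{2}$ or a direct Laplace expansion at the extremal point $(h,h)$. The exponent alone locates the phase transition, but the leading constant is still needed to certify that $\sigma^{2}$ and $\mu$ decay at the same rate, which is what makes the Lyapunov condition collapse to $(m\mu)^{-1/2}\to 0$ in Case 2.
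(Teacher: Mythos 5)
Your proposal is correct and takes essentially the same route as the paper: the same bivariate Gaussian tail asymptotic (your prefactor $(1+\lambda)^{3/2}/\sqrt{1-\lambda}$ equals the paper's $(1+\lambda)^2/\sqrt{1-\lambda^2}$) locates the phase transition at $\lambda = 2r-1$, Case 1 is the same first-moment bound (your union bound and the paper's Markov inequality give the identical estimate $m\mu(\lambda)$), and Case 2 rests on the same third-moment control $\mathbb{E}|Z_i-\mu|^3 \le \sigma^2$ together with $m\mu \to \infty$. The only cosmetic difference is that you invoke the Lyapunov triangular-array CLT where the paper applies a Berry--Esseen bound, whose quantitative rate $1/\sqrt{\mu(\lambda)m}$ it later reuses for the non-asymptotic main theorem.
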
  

The phase transition at $\lambda = 2r - 1$ can be intuited by the following
observations.  When $\lambda < 2r -1$, the expected number of non-zero summands
in $S(x,y)$ converges to 0.  When $\lambda > 2r -1$, the expected number of
non-zero summands converges to infinity, and the score exhibits Gaussian
behaviour.  Let us derive this precisely.  To ease notation, here and below let
$\ip{a_i}{x} = w_i$ and $\ip{a_i}{y} = v_i$.  Let $w, v$ be standard normal with
covariance $\lambda$ so that $w_i, v_i$ are independent copies of $w, v$.
Consider $\E m S(x,y) = m \mu$, that is, the expected number of non-zero
summands in the score.  To control this quantity, we need the following
bivariate normal tail approximation, which can be derived from
\cite{savageGaussianBounds}, \begin{equation}
\mu = P(\one_{[w > h]} \one_{[v > h]} = 1) \sim \frac{(1 + \lambda)^2}{2 \pi h^2 \sqrt{1 - \lambda^2}} \exp\left(- \frac{h^2}{1 + \lambda}\right) = C(\lambda) \frac{m^{- \frac{2r}{1 + \lambda}}}{2 r \log m}, \qquad C(\lambda) := \frac{(1 + \lambda)^2}{2 \pi \sqrt{1 - \lambda^2}}.
\end{equation}
Thus, the expected number of non-zero summands is 
\begin{equation}\label{eq: expected nonzeros}
\E m S(x,y) = m \mu \sim C(\lambda) \frac{m^{ \frac{\lambda - (2r - 1)}{1 + \lambda}}}{2 r \log m}.
\end{equation}
It is now clear that this quantity converges to infinity for $\lambda > 2r -1$
and converges to 0 for $\lambda < 2r - 1$.  This latter observation, combined
with Markov's inequality, already completes the proof of the lemma in Case 1.
Indeed, Markov's inequality shows that \begin{equation}\label{eq: use Markov}
P(S(x,y) \neq 0) = P(m S(x,y) \geq 1) \leq \E m S(x,y) = m \mu(\lambda) \rightarrow 0.
\end{equation}
The Gaussian behaviour of Case 2 does not follow from the vanilla central limit
theorem since the summands depend on $m$ (through $h$), but nevertheless is
proven as an application of a Berry-Esseen approximation.  We state this
approximation, and complete the proof, in Appendix~\ref{ssec: gaussian score
proof} below. 

\subsection{Main theoretical results}
\label{sec: main theory}
We now leverage Lemma \ref{lem: Gaussian score} to determine the efficacy of
 search via the sparsifying transformation of this paper.  In the following
theorem we give a precise asymptotic characterization of $\epsilon$, which
vanishes as $m$ increases.  Below, $\eta$ is a parameter which controls the
expected number of errors.

\begin{theorem}[Main theorem, asymptotic version]
\label{thm: main}
Fix $y \in \sphere^{d-1}, \lambda \in (2r -1, 1)$ also satisfying $\lambda > 0$ and $\eta > 0$. Let 
\begin{equation}
\label{eq: epsilon}
\epsilon = \epsilon(\lambda, m, \eta, r) = C(\lambda, r, m, \eta)\, m^{- \frac{\lambda - (2r - 1)}{2(1 + \lambda)}}, \qquad  C(\lambda, r, m, \eta) := \frac{\sqrt{2 \pi}\,(1 + \lambda)(1 - \lambda^2)^{1/4} }
{\sqrt{2 r \log m}} \eta.
\end{equation}

Then,
\[\lim_{m \rightarrow \infty} \sup_{|\X| = n}  \frac{\E (\text{Number of type I errors}) + \E (\text{Number of type II errors})}{n} = P(N(0,1) \geq \eta).\]
Thus, by taking $\eta = 2 \sqrt{\log n}$, we have
\[\lim_{m \rightarrow \infty} \sup_{|\X| = n}\frac{ \E (\text{Number of type I errors}) + \E (\text{Number of type II errors})}{n} \leq \frac{1}{n^2 \sqrt{8 \pi \log n}}.\]
\end{theorem}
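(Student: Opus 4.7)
The plan is to decompose the total error count by linearity, estimate each document's error probability via the asymptotic normality of Lemma~\ref{lem: Gaussian score}, and verify that the constant in $\epsilon$ is precisely calibrated so the standardized threshold tends to $\eta$.

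\textbf{Linearity and standardization.} By linearity,
$$\E(\text{type I errors}) + \E(\text{type II errors}) = \sum_{x \in \X} P(x\text{ is an error}),$$
and only $x$ with $\lambda_x := \ip{x}{y}$ in the $\epsilon$-relevant or $\epsilon$-irrelevant region can contribute. For each such $x$, rewrite
$$P(S(x,y) \geq \mu(\lambda)) = P(\tilde S(x,y) \geq t_x), \qquad t_x := \frac{\sqrt{m}\,(\mu(\lambda) - \mu(\lambda_x))}{\sigma(\lambda_x)},$$
and similarly in the opposite direction for type~II.

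\textbf{Threshold calibration.} Compute $|t_x|$ at the boundary $\lambda_x = \lambda \pm \epsilon$ using three ingredients: the asymptotic estimate for $\mu(\lambda)$ derived in Section~\ref{sec: theory}; its logarithmic derivative, which (because the exponential factor dominates) gives $\mu'(\lambda) \sim \mu(\lambda)\, h^2/(1+\lambda)^2$; and $\sigma^2(\lambda_x) = \mu(\lambda_x)(1-\mu(\lambda_x)) \sim \mu(\lambda_x)$. A first-order Taylor expansion then yields
$$|t_x| \sim \sqrt{m\,\mu(\lambda)}\,\epsilon\,\frac{h^2}{(1+\lambda)^2}.$$
Substituting $h^2 = 2r\log m$, the expression for $\mu(\lambda)$, and $C(\lambda) = (1+\lambda)^2/(2\pi\sqrt{1-\lambda^2})$, the powers of $m$ and logarithmic factors cancel cleanly and one obtains $|t_x| = \eta(1+o(1))$ exactly, identifying the leading constant in the definition of $\epsilon$ as the unique one sending the boundary threshold to $\eta$. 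For $|\lambda_x - \lambda| > \epsilon$, monotonicity of $\mu$ in $\lambda$ gives $|t_x| \geq \eta(1+o(1))$, so the error probability is only smaller.

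\textbf{Gaussianity and the negligible case.} For $x$ with $\lambda_x > 2r-1$, Case~2 of Lemma~\ref{lem: Gaussian score} yields $P(\tilde S(x,y) \geq t_x) \to 1 - \Phi(\eta) = P(N(0,1) \geq \eta)$. For $x$ with $\lambda_x < 2r-1$ (possible only on the type~I side since $\lambda > 2r-1$), the retrieval event $S(x,y) \geq \mu(\lambda) > 0$ forces $S(x,y) \neq 0$, so Case~1 gives $P(\text{error}) \leq m\,\mu(\lambda_x) \to 0$, which is negligible compared to $P(N(0,1) \geq \eta)$.

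\textbf{Supremum and the corollary.} Summing the per-document bounds and dividing by $n$ yields the upper bound $P(N(0,1)\geq \eta)(1+o(1))$. To match it from below, exhibit an adversarial $\X$ of $n$ vectors all with $\ip{x}{y} = \lambda + \epsilon$ exactly (feasible when $d$ is large enough), so that each contributes $P(N(0,1)\geq\eta)(1+o(1))$. The $\eta = 2\sqrt{\log n}$ corollary is then a one-line computation from $P(N(0,1) \geq \eta) \leq e^{-\eta^2/2}/(\eta\sqrt{2\pi})$.

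\textbf{Main obstacle.} Lemma~\ref{lem: Gaussian score} is stated as a fixed-$\lambda$ distributional convergence, whereas the argument above requires uniform control of $P(\tilde S(x,y) \geq t_x)$ as $\lambda_x$ ranges over an $m$-dependent set that includes boundary values $\lambda \pm \epsilon$ with $\epsilon \to 0$. Upgrading the lemma to a quantitative, uniform-in-$\lambda_x$ statement via the Berry--Esseen approach referenced in the appendix, with explicit control of the dependence on $\lambda_x$, is the delicate technical step; without such uniformity, the sum of $n$ individual error probabilities need not converge to $n\,P(N(0,1)\geq\eta)$.
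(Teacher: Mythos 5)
Your route is essentially the paper's: decompose the expected error count into per-document probabilities, standardize the event $S(x,y)\geq\mu(\lambda)$, Taylor-expand $\mu$ to check that the constant in \eqref{eq: epsilon} sends the standardized boundary threshold to $\eta$, and realize the supremum with an extremal corpus concentrated at the boundary. Your calibration computation (using $\mu'(\lambda)\sim \mu(\lambda)h^2/(1+\lambda)^2$ and $\sigma^2\sim\mu$) reproduces the paper's Lemma~\ref{lem: asymptotic parameters}. However, the step you yourself flag as the ``main obstacle'' is a genuine gap as written: Case~2 of Lemma~\ref{lem: Gaussian score} is a fixed-$\lambda$ limit and cannot be applied at $\lambda_x=\lambda\pm\epsilon_m$ with an $m$-dependent threshold. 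The paper closes exactly this gap with Lemma~\ref{lem: non-asymptotic normal} and Lemma~\ref{lem: non-asymptotic return x}: the Berry--Esseen bound $|F_{m,\lambda'}(t)-\Phi(t)|\leq 1/\sqrt{\mu(\lambda')m}$ holds uniformly in $t$ with explicit dependence on $\lambda'$, and Lemma~\ref{lem: asymptotic parameters} shows $\mu(\lambda-\epsilon_m)\sim\mu(\lambda)$ with $m\mu(\lambda)\to\infty$ by \eqref{eq: expected nonzeros}, so the approximation error vanishes at the boundary values. So the uniform quantitative statement you ask for does not need to be invented; it is the appendix lemma, and your proof is incomplete until you invoke it. Relatedly, your first-order Taylor step is not automatic, since $\mu$ depends on $m$ through $h$: the paper must bound $\sup_{t\in[\lambda-\epsilon_m,\lambda]}\mu''(t)\leq\bigl(h^2+\tfrac{\lambda}{1-\lambda^2}\bigr)\mu'(\lambda)$ and use $\epsilon_m h^2\to 0$, and this is precisely where the hypothesis $\lambda>0$ enters; your sketch never uses that hypothesis.

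A second soft spot is the sentence ``monotonicity of $\mu$ in $\lambda$ gives $|t_x|\geq\eta(1+o(1))$, so the error probability is only smaller.'' A larger standardized threshold does not by itself imply a smaller probability, because for $\lambda_x$ well below $\lambda$ the quantity $m\mu(\lambda_x)$ can be of order one (the Poisson-like regime), where the normal approximation error $1/\sqrt{\mu(\lambda_x)m}$ is not small and the Gaussian comparison is unavailable. The clean fix, which is what the paper does in \eqref{eq: prob dom}, is stochastic dominance: $mS(x,y)$ is Binomial$(m,\mu(\ip{x}{y}))$, and binomials with ordered success probabilities are stochastically ordered, so every non-boundary document's error probability is bounded by that of the boundary document. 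This also renders your separate treatment of documents with $\lambda_x<2r-1$ via Case~1 of Lemma~\ref{lem: Gaussian score} unnecessary. With the dominance step and the quantitative Berry--Esseen lemma inserted, your argument matches the paper's proof (Lemmas~\ref{lem: non-asymptotic return x}, \ref{lem: asymptotic parameters}, \ref{lem: G_x}, then summing indicators of $G_x^c$); the final bound for $\eta=2\sqrt{\log n}$ from the Gaussian tail is fine as you state it.
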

The last inequality follows from the Gaussian tail bound \eqref{eq: Gaussian tail}.  Note that the last inequality combined with Markov's inequality implies that, with probability at least $1 - \frac{1}{n \sqrt{8 \pi \log n}} - o(1)$, there are no type I errors or type II errors.  

It may be helpful to write this in a different way.  Define the (good) event $G $ := \{For every $x \in \X$, if $\ip{x}{y} \geq \lambda + \epsilon$, then $S(x,y) \geq \mu(\lambda)$; if $\ip{x}{y} \leq \lambda - \epsilon$, then $S(x,y) \leq \mu(\lambda)$.\}  Then the theorem, combined with Markov's inequality, implies that
\[\lim_{m \rightarrow \infty} \sup_{|\X| = n} P(G^c) \leq n P(N(0,1) \geq \eta).\]

We pause to remark on how this result falls into the framework of approximate nearest neighbour search.  
    Note that $\epsilon_m$ vanishes as $m \rightarrow \infty$ and thus,
    asymptotically, the search returns precisely the documents with the desired
    level of inner product with $y$.  However, for large, but finite $m$, exact
    recovery is not expected.  Instead, there is a small interval around $\lambda$,
    and if a document falls into this interval, we cannot predict whether it will be
    correctly returned (or not returned).  Outside the interval, documents are
    returned precisely as desired with high probability.  Our main result above
    characterizes the size of this interval.  This is a version of an approximate
    nearest neighbours solution~\cite{bern1993approximate}. 

    In order to give a more traditional treatment of the nearest neighbor problem, one
    may be interested in just keeping the document with highest score, rather than keeping all documents whose scores exceed a certain threshold.  The theorem
    may be leveraged to describe the accuracy of this method. Let $x_0$ be
    the vector in $\X$ which is closest to $y$, and let $\lambda_0 = \ip{x_0}{y}$.
    While the document with highest score is not guaranteed to be $x_0$, it is
    guaranteed to have inner product with $y$ nearly as high as $\lambda_0$.  The
    easiest way to see this is to make a slight change in the definition of $G$ by
    switching $\lambda$ with $\lambda - \epsilon$.  Thus, let $\epsilon =
    \epsilon(\lambda_0, m, r, \eta)$ be defined as in Equation \eqref{eq: epsilon}
    and define the event $G' $ := \{For every $x \in \X$ , if $\ip{x}{y} \geq
    \lambda_0$, then $S(x,y) \geq \mu(\lambda_0 - \epsilon)$; if $\ip{x}{y} \leq
    \lambda_0 - 2\epsilon$, then $S(x,y) \leq \mu(\lambda_0 - \epsilon)$.\}  One can
    see, by tweaking the proof, that the theory still holds with $G$ replaced by
    $G'$.  If the event $G'$ holds, then $S(x_0, y) \geq \mu(\lambda_0 - \epsilon)$,
    and, for any $x \in \X$ with $\ip{y}{x} \leq \lambda_0 - 2 \epsilon$, one has
    $S(x,y) < \mu(\lambda_0 - \epsilon)$.  Thus, under this event, which holds with
    high probability, the document with highest score must have inner product with
    $y$ at least $\lambda_0 - 2 \epsilon$, i.e., it is an approximate nearest
    neighbour. 

In the above theorem, we focus on asymptotics in order to give a rule of thumb; in particular, note that the main term quantifying the rate at which of $\epsilon$ decreases
is  $m^{- \frac{\lambda - (2r - 1)}{2(1 + \lambda)}}$.  However, we find in numerical simulations that $m$ must be quite large to realize
the expected asymptotic behaviour.  Thus, we present the following
non-asymptotic version of this theorem, with a more complex expression to
quantify the interval, but one that may be verified in numerical simulations
even for modestly large $m$.  

For use in this theorem, we make a slight change in the definition of errors,
allowing a different value of $\epsilon$ for type I errors versus type II
errors.  
\begin{definition}[Error events]  We define two types of events:  If an $\epsilon^-$-irrelevant document is retrieved this is called a \emph{type I error}.  If an $\epsilon^+$-relevant document is not retrieved this is called a \emph{type II error}.
\end{definition}

\begin{theorem}[Main theorem, non-asymptotic version]
    \label{thm: main non-asymptotic}
Fix $y \in \sphere^{d-1}, \lambda \in (2r -1, 1)$ and $\eta > 0$.  Define $\epsilon^-$ and $\epsilon^+$ to satisfy 
\begin{equation}
    \label{eq: define epsilons}
\frac{\mu(\lambda) - \mu(\lambda - \epsilon^-)}{\sigma(\lambda - \epsilon^-)} \cdot \sqrt{m} = \eta, \qquad\quad \frac{\mu(\lambda) - \mu(\lambda + \epsilon^+)}{\sigma(\lambda + \epsilon^+)}\cdot \sqrt{m} = -\eta
\end{equation}
provided there are solutions.  

If there are solutions, then
\[\left|\sup_{|\X| = n} \frac{\E (\text{Number of type I errors}) + \E (\text{Number of type II errors})}{n}  - P(N(0,1) \geq \eta)\right| \leq \frac{1}{\sqrt{\mu(\lambda - \epsilon^-)m}}\]
\end{theorem}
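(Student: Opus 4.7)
The plan is to reduce everything to a single-document analysis plus a Berry--Esseen approximation, then argue that the supremum over $\X$ is governed by documents sitting exactly on the boundary of the $\epsilon$-band. Fix any $x$ with $\ip{x}{y} = \lambda'$. Then $S(x,y) = m^{-1}\sum_{i=1}^m B_i$ where $B_i = \one_{[w_i \geq h]}\one_{[v_i \geq h]}$ are i.i.d.\ Bernoulli$(\mu(\lambda'))$, so by a Berry--Esseen bound applied to Bernoulli sums (using that the absolute third central moment is at most $\mu(\lambda')$ and $\sigma(\lambda')^2 = \mu(\lambda')(1-\mu(\lambda'))$) we get
\[
\sup_t \bigl|P(\tilde S(x,y) \leq t) - \Phi(t)\bigr| \;\leq\; \frac{1}{\sqrt{m\,\mu(\lambda')}}.
\]
The defining equations for $\epsilon^-,\epsilon^+$ are precisely the statements that the standardized retrieval threshold $\eta(\lambda') := \sqrt{m}(\mu(\lambda)-\mu(\lambda'))/\sigma(\lambda')$ equals $+\eta$ at $\lambda' = \lambda - \epsilon^-$ and $-\eta$ at $\lambda' = \lambda + \epsilon^+$, which is exactly what one wants to activate the Berry--Esseen bound at the critical points.

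Next I would establish a stochastic monotonicity statement: coupling Gaussian vectors $a_i$ across different values of $\lambda'$ shows that $S(x,y)$ is stochastically increasing in $\lambda'$, so $P(S(x,y)\geq \mu(\lambda))$ is increasing in $\lambda'$ while $P(S(x,y)<\mu(\lambda))$ is decreasing. Since a type I error requires $\ip{x}{y} \leq \lambda - \epsilon^-$ and a type II error requires $\ip{x}{y} \geq \lambda + \epsilon^+$, each document's per-document error probability is maximized when $\ip{x}{y}$ is pushed to the corresponding boundary. Applying Berry--Esseen at these two values gives, for any $x\in\X$,
\[
P(\text{type I for }x) \;\leq\; P(N(0,1)\geq \eta) + \tfrac{1}{\sqrt{m\mu(\lambda-\epsilon^-)}},
\qquad P(\text{type II for }x) \;\leq\; P(N(0,1)\geq \eta) + \tfrac{1}{\sqrt{m\mu(\lambda+\epsilon^+)}},
\]
and since each $x$ contributes to at most one of the two sums, averaging over the $n$ documents and using $\mu(\lambda+\epsilon^+) \geq \mu(\lambda-\epsilon^-)$ yields the upper side of the claimed inequality.

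For the matching lower bound on the supremum, I would construct an explicit $\X$ of size $n$ that saturates it: take every document $x$ to satisfy $\ip{x}{y} = \lambda - \epsilon^-$ (achievable by choosing unit vectors in $y^\perp$ plus an appropriate component along $y$). Then each document is $\epsilon^-$-irrelevant, and a second application of the same Berry--Esseen bound gives $P(\text{type I for }x) \geq P(N(0,1)\geq \eta) - 1/\sqrt{m\,\mu(\lambda-\epsilon^-)}$, so the averaged expected error count is at least the same quantity. Together the two bounds sandwich $\sup_\X \frac{1}{n}(\E(\text{type I})+\E(\text{type II}))$ within $1/\sqrt{m\,\mu(\lambda-\epsilon^-)}$ of $P(N(0,1)\geq \eta)$.

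The main technical obstacle is getting the Berry--Esseen bound in the clean form $1/\sqrt{m\,\mu(\lambda')}$ with no extraneous absolute constant; this is what forces the use of the Bernoulli-specific third moment computation rather than a generic Berry--Esseen citation. A secondary technical point is the stochastic monotonicity of $S(x,y)$ in $\lambda'$, which I would verify via a direct coupling: writing $v_i = \lambda' w_i + \sqrt{1-\lambda'^2}\, z_i$ for independent standard normal $w_i, z_i$ shows $B_i$ is monotone increasing in $\lambda'$ almost surely (for $h\geq 0$), giving the required stochastic ordering summand-by-summand.
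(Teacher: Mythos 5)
Your proposal is correct and follows essentially the same route as the paper: the Bernoulli-specific Berry--Esseen bound giving $1/\sqrt{\mu(\lambda')m}$ (the paper's Lemmas \ref{lem: non-asymptotic normal} and \ref{lem: non-asymptotic return x}), reduction to documents at the boundary inner products $\lambda-\epsilon^-$ and $\lambda+\epsilon^+$ via stochastic monotonicity of the score in $\ip{x}{y}$ (the paper's Lemma \ref{lem: non-asymptotic G}, which gets the domination from the binomial law and monotone $\mu$ rather than your Gaussian coupling), and the same per-document error decomposition with a boundary corpus saturating the supremum. One small tightening: to obtain the constant-free bound you should use $\E|z_1|^3 \le \mu(1-\mu)=\sigma^2$ together with $\mu \le 1/2$ (as the paper does), since the weaker bound $\E|z_1|^3 \le \mu$ that you quote leaves a constant $0.4748\cdot 2\sqrt{2} > 1$ in front of $1/\sqrt{\mu(\lambda')m}$.
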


For sufficiently
large $m$, Equation \eqref{eq: define epsilons} has solutions, and furthermore,
$\epsilon^-, \epsilon^+ \rightarrow 0$ (see Remark \ref{rem: epsilons} below).  

\begin{remark}[Understanding the accuracy of the Gaussian approximation]
The bound on the accuracy of the approximation $1/\sqrt{\mu(\lambda - \epsilon^-)m}$ has an intuitive meaning, which is further verified in simulations.  Indeed, note that $\mu(\lambda - \epsilon^-)m \approx \mu(\lambda) m$ is roughly the expected number of non-zero summands in the score when $\ip{x}{y} \approx \lambda$ (see Equation \eqref{eq: expected nonzeros}).  If this value is very small, one expects the score to be approximated by a discrete Poisson distribution, rather than the continuous Gaussian distribution given in the theorem.  In this sparser regime, we find that the accuracy oscillates above and below what is expected by the Gaussian approximation as $m$ is increased.  This effect is well studied in \cite{Brown2002}.
\end{remark}


\section{Practical considerations}
\label{s:practical}

Practitioners implementing our sparse mapping scheme need to consider tradeoffs
between complexity and accuracy.  In this section, we not only illustrate how
the results of the previous section inform practical design decisions that must
be made when incorporating sparse mapping into existing infrastructure, but we
also address the cost of performing the mapping itself, proposing the use of a
structured random matrix over the Gaussian random matrix used in our analysis
thus far.  Although the principal advantage of our approach is that it can be
implemented on standard search engine infrastructure, we also illustrate here
the marked improvement in complexity over an exhaustive search.  Nevertheless, we emphasize that optimizing the precise complexity-accuracy tradeoff is not our main goal in this paper; instead the goal is to map to a space utilizable by commercial search engines.

\subsection{Complexity-accuracy tradeoffs}

Previous sections point to the tradeoff between search complexity and accuracy.
Search complexity is determined by the amount of data provided to the
two processes -- indexing and searching -- undertaken by the search engine.
Accuracy is determined by the loss of fidelity in our random transform of
documents and queries.  Here we make explicit the relationship between
parameters $m$ and $r$ and their combined effect on complexity and accuracy.

\paragraph{Indexing cost}  In the indexing step, the search engine pre-processes
the documents into a data structure suitable for efficient searching.  The
indexer creates a \emph{posting list} for each term it encounters, and appends
to each posting list the pointers to documents containing that term.  In our
case, each term is an element of the $m$-dimensional output of the random
projection; documents exceeding our threshold $h$ in a dimension are added to
its respective posting list%
\footnote{%
    Readers familiar with search technology will note that a posting list can
    furthermore store the number of times the term appears in each document.
    For simplicity, we do not exploit this capacity, instead preferring to
    increase $m$, hence the number of terms we index.
}.

The cost of storing the search index is dominated by the number elements
appearing in each of the term posting lists.  Equivalently, this is the number
of unique terms in each document multiplied by the number of documents.  After
our transformation, sparsity estimates~\eqref{eq: Gaussian tail} indicate that
indexing the entire corpus, size $|\X|=n$, has an expected storage cost of
\begin{equation}
    \bigO(nk) = \bigO\left( \frac{n m^{1-r}}{\sqrt{r \log m}} \right).
\end{equation}
That is, $r$ determines the rate at which indexing complexity increases with $m$.
Figure~\ref{f:sparsity-gaussian} is a plot of the relationship between $m$ and $\E
k$ for various $r$. 

\begin{figure}
    \centering
    \begin{subfigure}[b]{0.44\textwidth}
        \includegraphics[width=\textwidth]{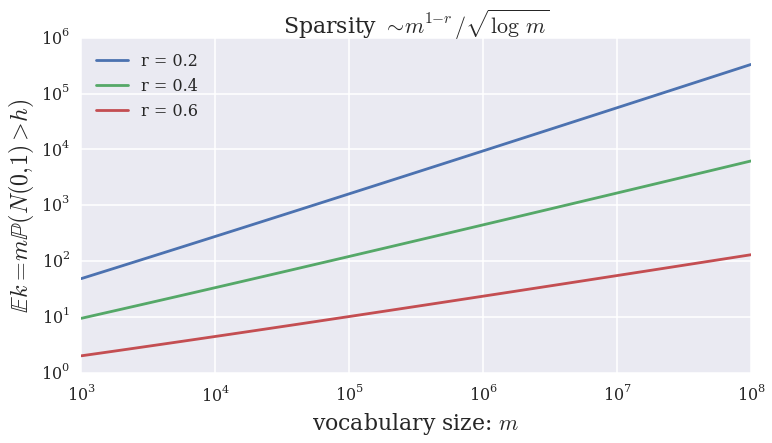}
        \caption{Gaussian mapping (calculated)}
        \label{f:sparsity-gaussian}
    \end{subfigure}
    \hspace{0.1\textwidth}
    \begin{subfigure}[b]{0.44\textwidth}
        \includegraphics[width=\textwidth]{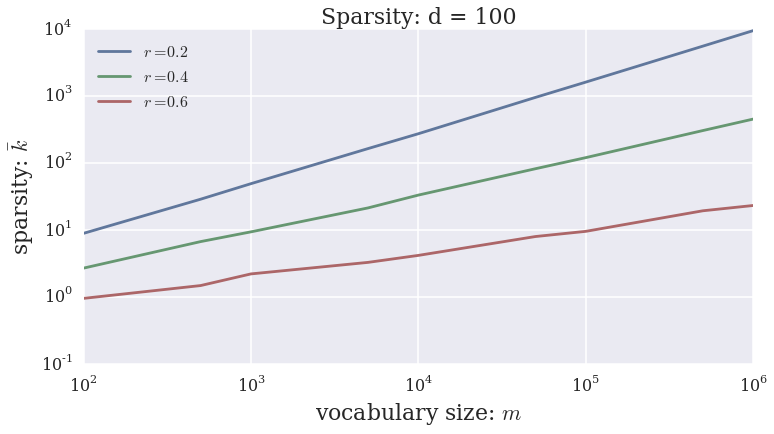}
        \caption{Structured mapping (simulated)}
        \label{f:sparsity-simulated}
    \end{subfigure}
    \caption{Expected posting list length, or sparsity, showing the effect of
        $r$ on the rate of sparsity increase with $m$.  Note that since $r>0$,
        the expected number of posting lists in which each document appears is
        in sublinear in $m$.}
    \label{f:sparsity}
\end{figure}

\paragraph{Search cost}  Sparsity in the transformed space also determines
search time complexity.  We first bound worst-case complexity.  The search engine must expect to examine the $k$
posting lists indicated by the query vector (in transformed space).  Each of these lists has length bounded by $n$, and hence each search must examine
\[ \E \bigO(nk) = \bigO\left(n\frac{m^{1 - 2r}}{\sqrt{2r \log m}}\right)\]
documents.  However, in practice one would not expect each mapped vector in the corpus to have precisely the same support as the mapped query vector, and thus each list size would be much smaller.  

We now give a rough average-case complexity. In practice, we find most data sets are clustered, i.e., there are a cluster of corpus elements close to the query, and the rest are nearly orthogonal to the query.  Indeed, in image search, most images in a data base have nothing to do with the query, and unrelated (random), high-dimensional, vectors tend to be nearly orthogonal\footnote{The inner product between random high dimensional vectors concentrates very close to zero \cite{LT}.}.  Thus, as an approximation, consider the case when $n_1$ of the corpus elements are near to the query, and $n - n_1$ corpus elements are orthogonal to the query.  Further, note that when the query vector and the corpus vector are orthogonal, the random mappings are independent.  Thus, for each of the $n - n_1$ corpus elements orthogonal to the query, the probability that this element contributes to a posting list is bounded by $\E \bigO(k/m)$.  It follows that each posting list has an expected length bounded by $\E \bigO((n - n_1) k/m) + \E \bigO(n_1)$.  If $n \gg n_1 m/k$, the first term dominates.  Then, since there are $k$ posting lists, each search must examine an expected
\begin{equation}
    \E \bigO(nk^2/m) = \bigO\left( \frac{n m^{1-2r}}{r \log m} \right)
\end{equation}
documents.  Here is where we see an advantage over the exhaustive comparison of
the query against each of $n$ documents, an $\bigO(nd)$ procedure.

Finally, to speed up search, we suggest using a larger value of $h$ to generate
queries than that used to index the corpus.  This makes query vectors sparser than
the $\E k$ calculated above.  This modification leverages engineering design choices
consequential to an asymmetry in text-based search, where queries tend to have
many fewer terms than the text documents they retrieve.  For simplicity, we do not address this scenario in our section on theory (although we believe it would be straightforward to adjust our theory to this setting).  However, our search
experiments of Section~\ref{s:experiments} use larger $h$ for search query
transformation than for indexing with no loss of fidelity, with significant
improvements in search-time performance.

Of course, the costs of indexing and search are determined not only by
transformed document sparsity, but also by the cost of doing the transformation.
In the case of our transformation by a dense Gaussian matrix, this is the cost
of a matrix multiplication, $\bigO(md)$.  We can make this cost sublinear in $m$
if we use a structured random matrix to transform our corpus vectors:  see
Section~\ref{s:structured}, below.

\paragraph{Accuracy}  We focus our discussion of accuracy on the rate at which
we erroneously return documents that do not match the query.  The essential
result is that the interval $[\lambda-\epsilon^-, \lambda+\epsilon^+]$ shrinks
as $m$ increases at a rate that depends on $r$.  Although given implicity
by~\eqref{eq: define epsilons}, the relationship between this confidence
interval and $m$ is approximately $\epsilon^++\epsilon^- \sim
\bigO(\sqrt{m^{r-1}})$.  (Up to logarithmic factors, this heuristic matches the asymptotic theoretical rate
of decay of $\epsilon$ given in Theorem \ref{thm: main} for $\lambda$
approaching 1.) 
Figure~\ref{f:epsilon} gives an example relationship between
$\epsilon^\pm$ and $m$ for $\lambda=0.9$.  Naturally, accuracy in this sense
increases with $m$, although contrary to cost estimates, does so more rapidly
with decreasing $r$.
\begin{figure}
    \centering
    \includegraphics[width=0.8\textwidth]{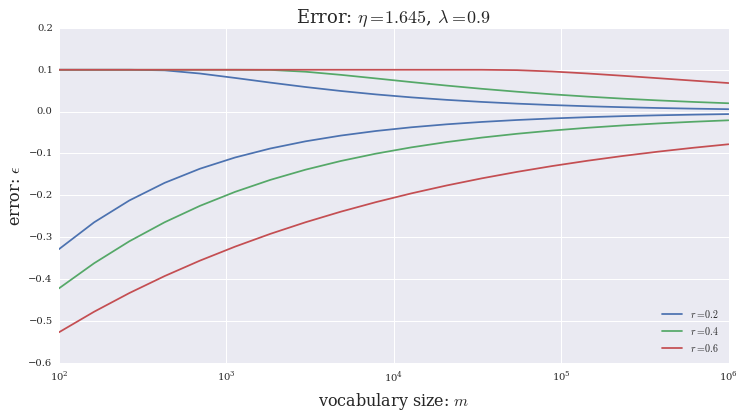}
    \caption{Relationship between accuracy and $m$ for various parameters $r$.
    Two curves for each value of $r$ are shown, the upper and lower curves
    representing $\epsilon^+$ and $\epsilon^-$, respectively.  The choice
    $\eta=1.645$ corresponds to a 5\% probability that we encounter a type I or
    type II error.}
    \label{f:epsilon}
\end{figure}

\subsection{Fast mappings using structured matrices}
\label{s:structured}

Because a Gaussian random matrix $A \in \real^{m\times d}$ is dense and
unstructured, transforming each corpus document and each query is $\bigO(md)$.
To speed this up, we take advantage of the growing literature on
structured random mappings, which suggests that structured random
matrices, which allow fast transforms, behave similarly to Gaussian
matrices.  In particular, \cite{krahmer2011new} shows that the metric-preserving
property of the Johnson-Lindenstrauss Lemma can be achieved via multiplication
by a random diagonal matrix followed by a (fast) discrete Fourier transform.
Thus, we try a similar fast, structured mapping, and show through numerical
simulations that it behaves much the same as a Gaussian matrix, provided that
$m$ is large enough, and the documents are not overly sparse.

The structured mapping we propose is in two steps.  First, we apply a random
but fast linear mapping of our vector $x \in \sphere^{d-1}$ to an intermediary $u \in
\real^m$.  Then, we perform a discrete cosine transform on $u$, thresholding
each element of the output by $h>0$, just as we did for the Gaussian transform.

In our experiments, the specific transform $D$ we choose to expand $u = Dx$
creates $m/d$ copies of $x$, then randomly changes the sign of each element.
That is, we may represent this $\bigO(m)$ operation as multiplication by matrix
\begin{equation}
    D = \begin{bmatrix} D_1 \\ D_2 \\ \vdots \\ D_{(m/d)} \end{bmatrix},
\end{equation}
where each block $D_r \in \real^{d\times d}$ is a diagonal matrix of random $\pm
1$.  We have restricted $m, d$ such that $d$ is a divisor of $m$. 

To intermediary $u$, we apply a normalized type 2 discrete cosine transform
(DCT-II), which we can represent as $v = Fu$, or specifically,
\begin{align}
    v_i &= c_i\sqrt{d/m} \sum_{j=1}^{m} u_j \cos\left(\frac{\pi (i-1)(2j-1)}{2m}\right),
    \quad i=1 \ldots m \nonumber \\
    c_i &= \begin{cases} 1, & i=1 \\ \sqrt{2}, & \text{otherwise} \end{cases}
    \label{eq:dct2}
\end{align}
This is an $\bigO(m\log m)$ transform, just as is the complex discrete Fourier
transform.  The normalization factor we choose for $F$ in~\eqref{eq:dct2}
ensures that $\E \norm{FDx}^2 = m$, just as $\E \norm{Ax}^2 = m$ where $A \in
\real^{m \times d}$ is the Gaussian matrix we chose previously.  With this
choice, we can use the same threshold $h$ that we used in the Gaussian analysis.

Computer experiments suggest that our proposed structured mapping approximates
the behaviour of the Gaussian mapping that we prove in our main theorems.
Figure~\ref{f:sparsity-simulated} illustrates not only that sparsity grows at a
rate comparable to the $\bigO(m^{1-r}/\sqrt{\log m})$ rate realized by the Gaussian
mapping, but bears the same absolute values.

Figures~\ref{f:accuracy-simulated} demonstrate similar behaviour of our Gaussian
and structured mappings in the model case where we have a corpus of a single
document.  We query $\X = \{x\}$ using vector $y$, setting
$\ip{x}{y}=\lambda-\epsilon^-$ to examine type I error and
$\ip{x}{y}=\lambda+\epsilon^+$ when examining type II error.  Queries are
iterated over many random \emph{transforms} (as opposed to many random $x, y$)
to measure error rates.

Note that choosing $d=2$ in simulating the Gaussian mapping gives the same
result as larger $d$:  all that we require for this mapping is that each mapped
element $\ip{a_i}{x} \sim N(0,1)$.  For testing the structured mapping, we
select dense $x,y$ of modest dimension $d=100$.

\begin{figure}
    \centering
    \begin{subfigure}[b]{0.44\textwidth}
        \includegraphics[width=\textwidth]{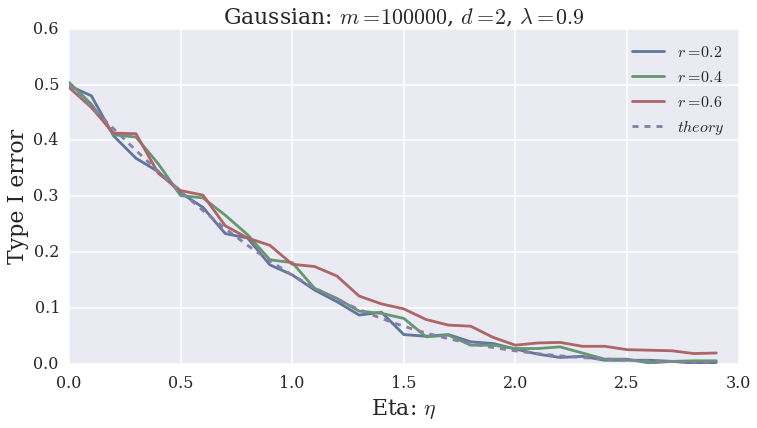}
        \caption{Gaussian type I error}
        \label{f:gaussian-typeI}
    \end{subfigure}
    \hspace{0.1\textwidth}
    \begin{subfigure}[b]{0.44\textwidth}
        \includegraphics[width=\textwidth]{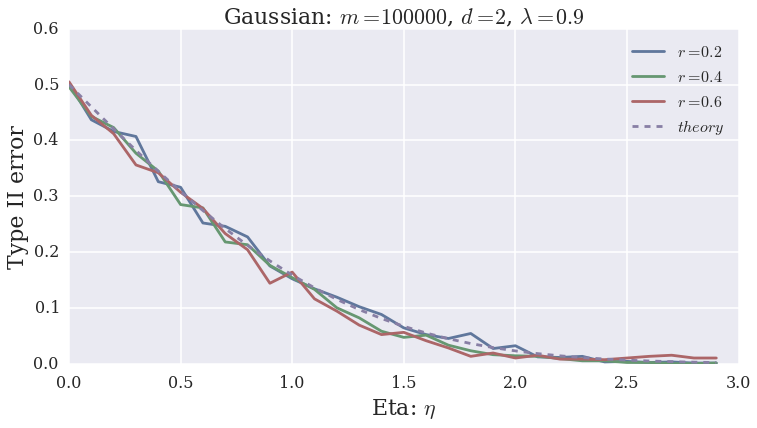}
        \caption{Gaussian type II error}
        \label{f:gaussian-typeII}
    \end{subfigure}

    \vspace{0.3in}
    \begin{subfigure}[b]{0.44\textwidth}
        \includegraphics[width=\textwidth]{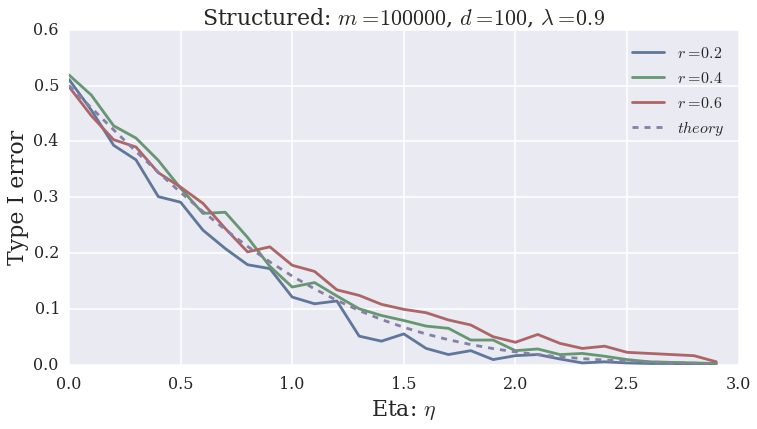}
        \caption{Structured type I error}
        \label{f:structured-typeI}
    \end{subfigure}
    \hspace{0.1\textwidth}
    \begin{subfigure}[b]{0.44\textwidth}
        \includegraphics[width=\textwidth]{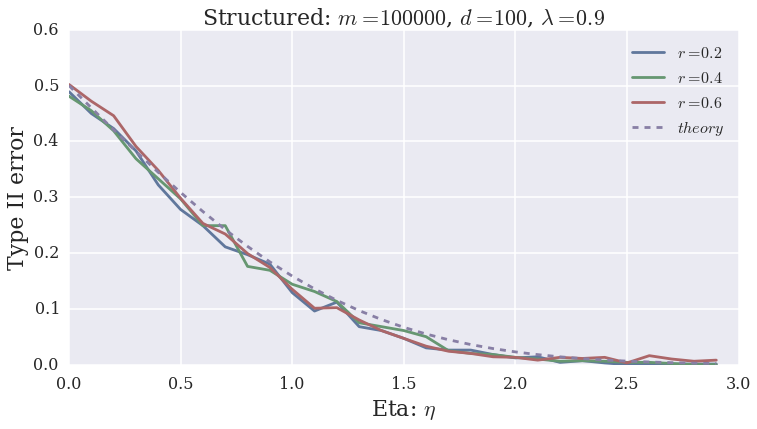}
        \caption{Structured type II error}
        \label{f:structured-typeII}
    \end{subfigure}

    \vspace{0.3in}
    \begin{subfigure}[b]{0.44\textwidth}
        \includegraphics[width=\textwidth]{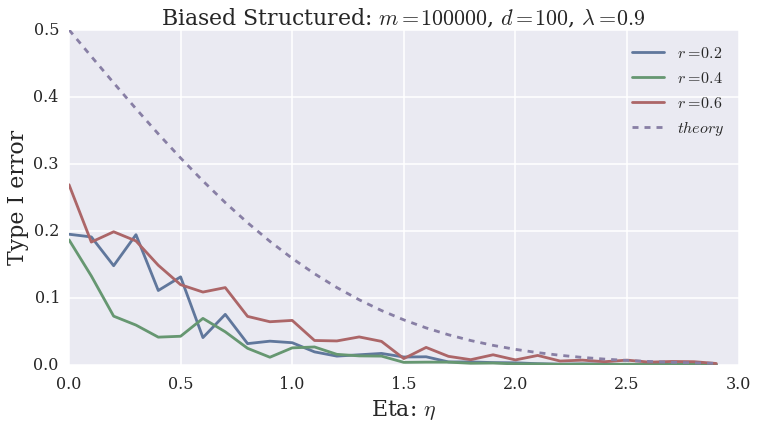}
        \caption{Biased structured type I error}
        \label{f:biased-typeI}
    \end{subfigure}
    \hspace{0.1\textwidth}
    \begin{subfigure}[b]{0.44\textwidth}
        \includegraphics[width=\textwidth]{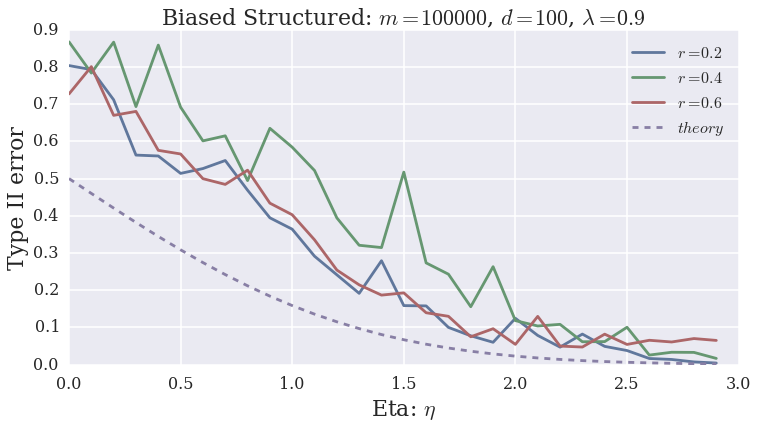}
        \caption{Biased structured type II error}
        \label{f:biased-typeII}
    \end{subfigure}
    \caption{Comparison of Gaussian transformation supported by theory and
        simulated results of the structured mapping for a random 1-element corpus.}
    \label{f:accuracy-simulated}
\end{figure}

We do not believe that the particular $F, D$ we choose to construct our
structured random mapping are the only possibilities.  In particular, we
imagine that the speed of at least the Fourier transform step can be increased by
leveraging recent work in randomized Fourier transforms that realize $\bigO(k
\log m \log(m/d))$ complexity of determining the $k$ larges entries of the Fourier transform~\cite{hassanieh2012simple}.  

However, we caution the reader that even if normalized correctly, the expansion
$u=Dx$ cannot be arbitrary: In one of our early attempts, we chose $D\in
\real^{m\times d}$ a matrix with exactly one random $\pm 1$ in each column and
at most one non-zero in each row.  Although the sparsity of the resulting map
(not illustrated) is comparable to the sparsity of the Gaussian map, the
asymmetry between simulated type I and type II errors shown in
Figures~\ref{f:biased-typeI} and~\ref{f:biased-typeII} indicate output biased
towards moving relevant documents apart from each other.  We conjecture that for
an unbiased transform, the vector we input to the Fourier transform must be
dense with elements having mean zero.  However, a study of the class of
structured random mappings which approximate the behaviour proven in our main
theorems remains, for the time being, future work.


\section{Experiments}
\label{s:experiments}

We now evaluate the performance of the random mapping approach using two
different datasets:  search based on the color of Wikipedia images, and Dow
Industrial market data based on closing value percentage differences going back
to the inception of this index.  Searches take a seed item as a query, and
attempt to find other items having similar features. We note that our
two examples are from remarkably different areas of study:  all that is required 
is that corpus documents be represented by vectors in $\real^d$. In both cases, 
we evaluate recall and precision as well as ranking of top search results. Below, \textit{relevant} is the set of all relevant documents in the corpus and \textit{retrieved} is the set of all retrieved documents in the corpus.  Precision and recall scores are calculated as:
\begin{equation}
    precision = \frac{|relevant \cap retreived|}{|retrieved|}, \quad recall =
    \frac{|relevant \cap retrieved|}{|relevant|}
\end{equation}
\subsection{ImageCLEF Wikipedia image corpus}

For its intuitive evaluation and ubiquity, our first test is a search of images
of common color in the ImageCLEF 2010 Wikipedia
Collection~\cite{muller2010experimental}.  We process each image by transforming
images to HSV colorspace, and binning the pixels of each image into histograms.
Two example searches are shown in Figure~\ref{f:image-search}.  For these
searches, we use other images as the queries, and so the first step in each
search is to extract the color histogram from the query image before
transforming it.
\begin{figure}
    \centering
    \begin{subfigure}[b] {0.8\textwidth}
        \includegraphics[width=\textwidth]{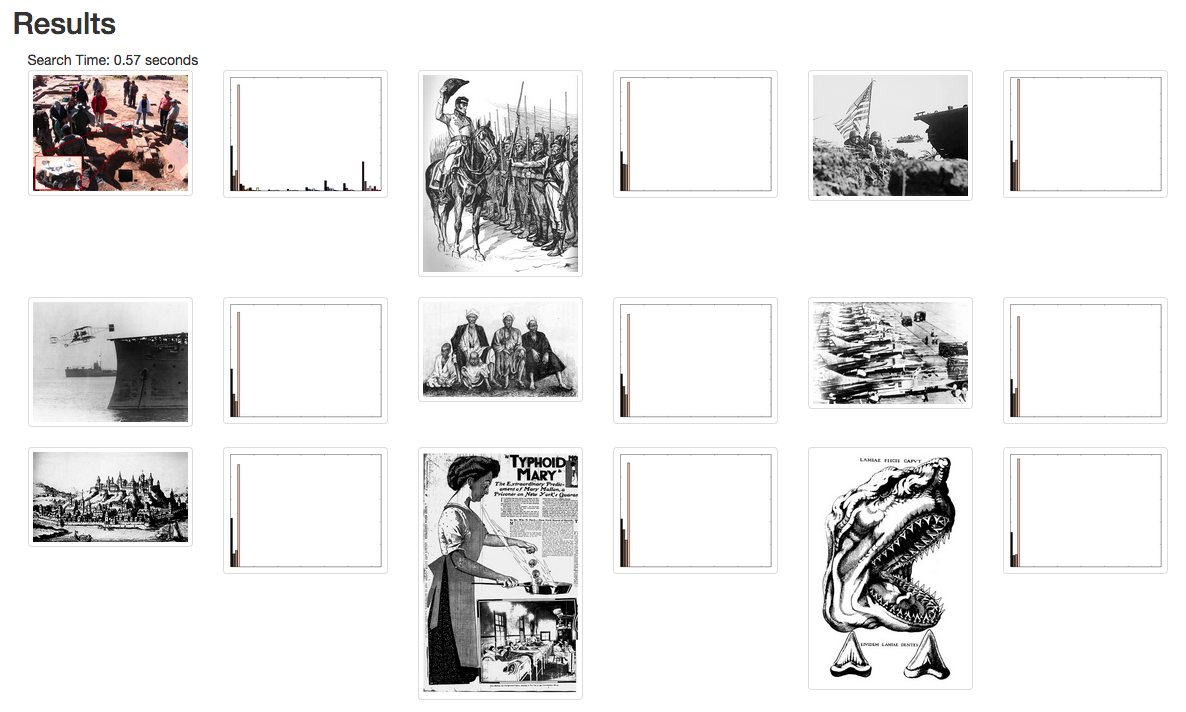}
        \caption{Mediocre result}
        \label{f:image-poor}
    \end{subfigure}
    \vspace{0.3in}
    \begin{subfigure}[b] {0.8\textwidth}
        \includegraphics[width=\textwidth]{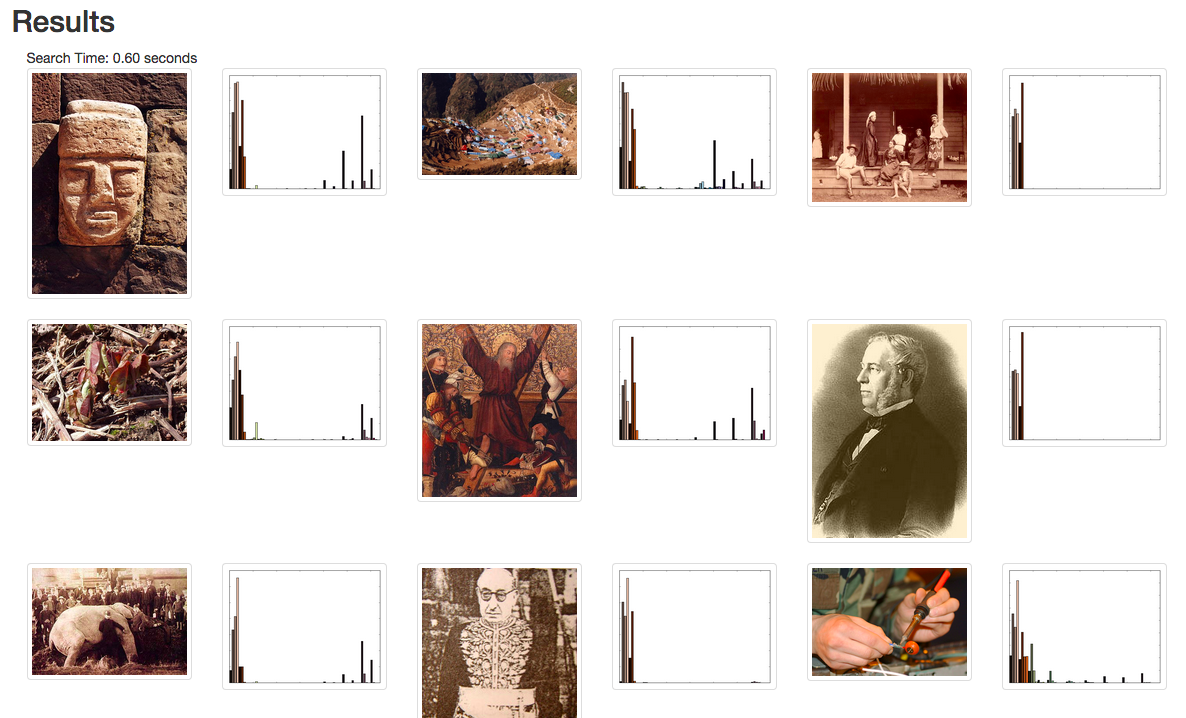}
        \caption{Good result}
        \label{f:image-good}
    \end{subfigure}
    \caption{Example searches for images of similar color over the Wikipedia
        ImageCLEF corpus.  The top left is the query.  Note that even in the case of the visually mediocre
        result, the match of the histograms is good.  The medicore result is
        mediocre because of the lack of good color resolution in the HSV bins
        for this query.}
    \label{f:image-search}
\end{figure}

Results for histograms of 128 color bins are shown.  We achieve good results
with corpora represented by 32 and 64 bins.  However, as shown by the
search-time comparison in Figure~\ref{f:search-time}, our method does not
penalize large document vectors, and we choose the higher-resolution histograms.
We also choose our structured random embedding over the Gaussian transformation
at no loss in functionality.  Note that the histograms accompany each image in
Figure~\ref{f:image-search} so that errors due to our transform may be
distinguished from quantization error due to color histogram binning.
\begin{figure}
    \centering
    \includegraphics[width=0.8\textwidth]{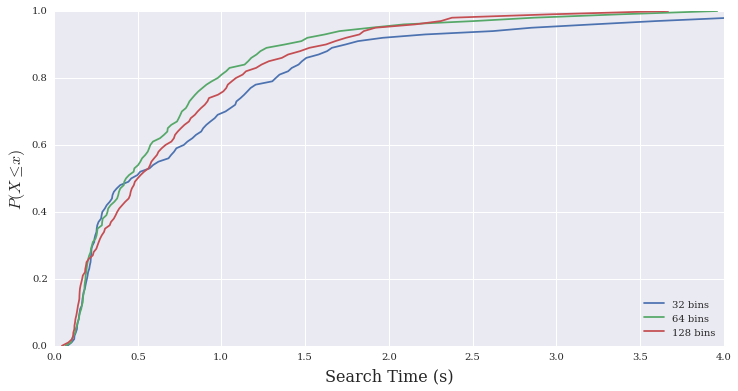}
    \caption{Empirical cumulative distributions of search times for each of
        the three color histogram binning schemes, namely, $d=32, 64, 128$.
        Search times have similar statistics, independent of $d$.}
    \label{f:search-time}
\end{figure}

Figure~\ref{f:pr-image} gives a more quantitative representation of performance
as the precision-recall curve.  To generate each point on these curves, we
fix $\ip{x}{y} = \lambda$ our threshold for relevant documents, and $S(x,y)
= \mu(\lambda)$ our threshold for returned documents.  Error bars representing
the standard error of precision and recall are generated
for each point by sampling over may query images.
\begin{figure}
    \centering
    \includegraphics[width=0.8\textwidth]{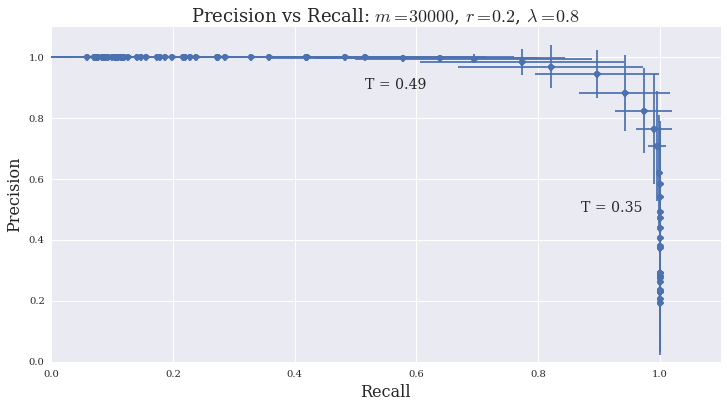}
    \caption{Precision-Recall curve for 128-bin histograms showing the
        performance of the sparse transform parameterized by $T$, the threshold
        for relevant results.  Error bars represent a range of performance over
        randomly sampled query images.}
    \label{f:pr-image}
\end{figure}
Naturally, precision increases with $\lambda$, but most importantly, the area
under the mean precision-recall curve is nearly 1, showing that our method tends
to preserve the ordering of ranked results.

\subsection{Dow Industrial corpus}

Inspired by the finance literature employing nearest neighbour estimates for
forecasting markets, the second test of our method draws on the Dow Jones Industrial
Average~\cite{centralbankbrazil}.  Each vector of this
corpus is a 10-element vector of relative differences between closing index
values for each of preceeding 5 and succeeding 5 trading days.  ``Bullish''
elements are characterized by large positive differences, while ``bearish''
differences are characterized by large negative differences.
A search, thus, takes a single day as query and finds
days with similar time-local trading patterns.  Example results of two searches,
queried by each of a bearish and bullish day, are shown in
Table~\ref{t:dow-search1} and Table~\ref{t:dow-search2}.
\begin{table}
\footnotesize
\centering
\begin{minipage}{0.48\textwidth}
\begin{tabular}{lrrc}
\hline\hline
Date & \% Change & True score & Vector \vspace{1 mm} \\\hline \\

1907-03-14 & -8.289 & 1.0 & \includegraphics[height=0.2in, width=1.0in]{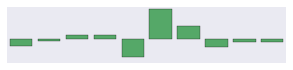}\\
1957-05-28 & -0.298 & 0.916 & \includegraphics[height=0.2in, width=1.0in]{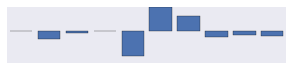}\\
1945-08-08 & 0.173 & 0.934 & \includegraphics[height=0.2in, width=1.0in]{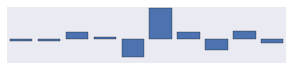}\\
1943-11-18 & 0.422 & 0.917 & \includegraphics[height=0.2in, width=1.0in]{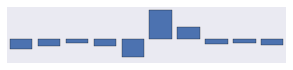}\\
1972-09-26 & 0.089 & 0.935 & \includegraphics[height=0.2in, width=1.0in]{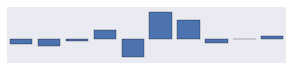}\\
1898-01-05 & 1.268 & 0.917 & \includegraphics[height=0.2in, width=1.0in]{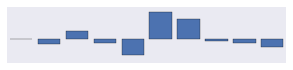}\\
1971-11-02 & 0.257 & 0.863 & \includegraphics[height=0.2in, width=1.0in]{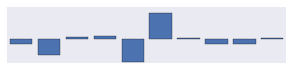}\\
1901-05-09 & -6.051 & 0.881 & \includegraphics[height=0.2in, width=1.0in]{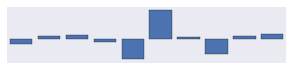}\\
1975-03-25 & 0.6 & 0.843 & \includegraphics[height=0.2in, width=1.0in]{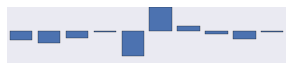}\\
1985-08-19 & -0.017 & 0.868 & \includegraphics[height=0.2in, width=1.0in]{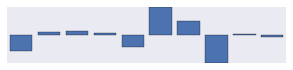}\\
1929-10-29 & -11.729 & 0.865 & \includegraphics[height=0.2in, width=1.0in]{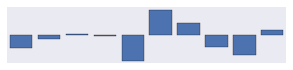}\\
1896-07-21 & 0.59 & 0.844 & \includegraphics[height=0.2in, width=1.0in]{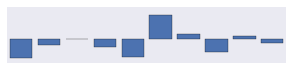}\\
1979-09-19 & 0.263 & 0.858 & \includegraphics[height=0.2in, width=1.0in]{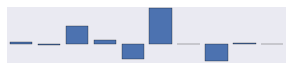}\\
1968-01-31 & -0.477 & 0.905 & \includegraphics[height=0.2in, width=1.0in]{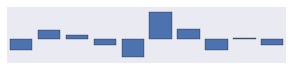}\\
\\
\hline\hline
\end{tabular}
\caption{Query is the Dow bust of 1907-03-14 (top row).}
\label{t:dow-search1}
\end{minipage}
\hfill
\begin{minipage}{0.48\textwidth}
\centering
\begin{tabular}{lrrc}
\hline\hline
 Date & \% Change & True Score & Vector \vspace{1 mm} \\\hline \\
2008-10-28 & 10.878 & 1.0 & \includegraphics[height=0.2in, width=1.0in]{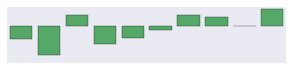}\\
1980-12-12 & 0.958 & 0.939 & \includegraphics[height=0.2in, width=1.0in]{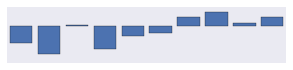}\\
1898-07-18 & -0.191 & 0.873 & \includegraphics[height=0.2in, width=1.0in]{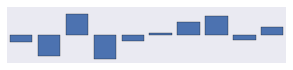}\\
1930-02-26 & 2.382 & 0.901 & \includegraphics[height=0.2in, width=1.0in]{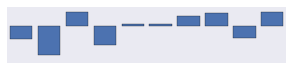}\\
1927-03-07 & -0.383 & 0.883 & \includegraphics[height=0.2in, width=1.0in]{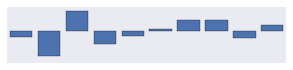}\\
1909-11-11 & -0.01 & 0.822 & \includegraphics[height=0.2in, width=1.0in]{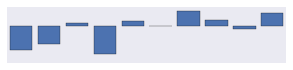}\\
1962-09-28 & 0.847 & 0.872 & \includegraphics[height=0.2in, width=1.0in]{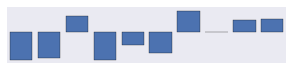}\\
1939-02-11 & 0.647 & 0.818 & \includegraphics[height=0.2in, width=1.0in]{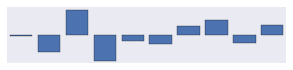}\\
1900-09-25 & 0.321 & 0.839 & \includegraphics[height=0.2in, width=1.0in]{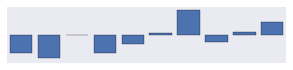}\\
1913-05-16 & 0.408 & 0.889 & \includegraphics[height=0.2in, width=1.0in]{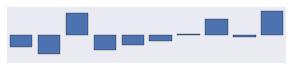}\\
1973-05-01 & -0.024 & 0.76 & \includegraphics[height=0.2in, width=1.0in]{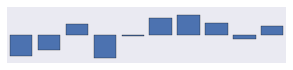}\\
1938-04-30 & -0.34 & 0.884 & \includegraphics[height=0.2in, width=1.0in]{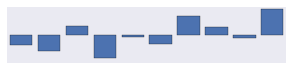}\\
1980-04-22 & 4.047 & 0.869 & \includegraphics[height=0.2in, width=1.0in]{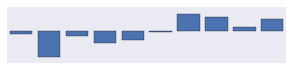}\\
1979-02-06 & -0.137 & 0.9 & \includegraphics[height=0.2in, width=1.0in]{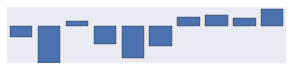}\\
\\
\hline\hline
\end{tabular}
\caption{Query is the Dow boom of 2008-10-28 (top row).}
\label{t:dow-search2}
\end{minipage}
\end{table}

This second example has two appealing features:  First, it is \emph{not} an image
corpus, illustrating that our method is not an image search method (though it
may find application there), but is a search for any data well-represented by
vectors in $\real^d$.  Second, as shown in Figure~\ref{f:stats-compare}, the
statistics of these Dow data differ significantly from those of the image data.
Despite these different statistics, the precision-recall curve for the Dow
Industrial data, Figure~\ref{f:pr-dow}, shows the ranked search performance to be 
comparable to that for the image data.
\begin{figure}
    \centering
    \includegraphics[width=0.8\textwidth]{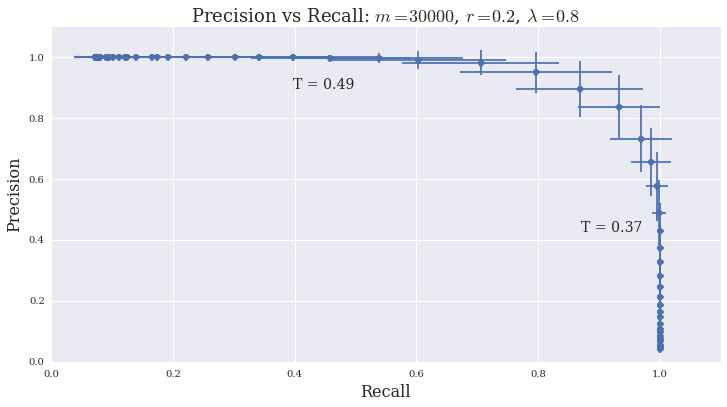}
    \caption{Precision-recall curve for the Dow data parameterized by $T$,
        the inner product threshold for relevant results.}
    \label{f:pr-dow}
\end{figure}
\begin{figure}
    \centering
    \begin{subfigure}[b] {0.45\textwidth}
        \includegraphics[width=\textwidth]{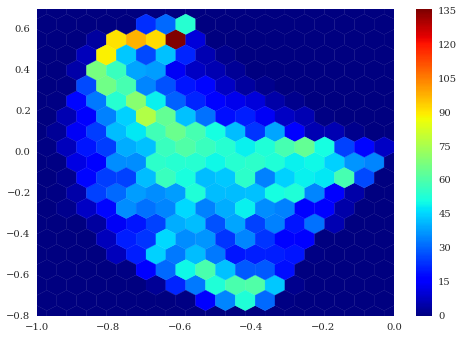}
        \caption{Wikipedia ImageCLEF 128-bin color histograms.}
    \end{subfigure}
    \begin{subfigure}[b] {0.45\textwidth}
        \includegraphics[width=\textwidth]{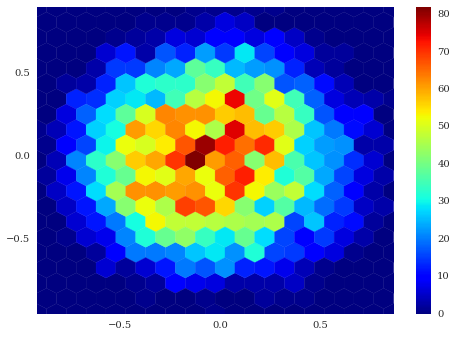}
        \caption{Dow Industrial daily differences, 10-day window.}
    \end{subfigure}
    \caption{Comparison of the statistics of the Wikipedia ImageCLEF and Dow
        Jones Industrial corpora.  Each plot shows the density of the projection 
        of the corpora onto their first two singular vectors.}
    \label{f:stats-compare}
\end{figure}

\subsection{Online resources}

See~\url{https://gitlab.com/dgpr-sparse-search/code} for code for the
simulations in Section~\ref{s:practical} and the search demonstrations.  The
code for this project is written in Python, with simulations appearing as
IPython notebooks.  Whereever possible, custom code is avoided in favour of
off-the-shelf open-source projects.  Search examples use the Django web
framework, and are powered by the Whoosh search package.  As evidenced by the
long search times (median $\approx 500\text{ ms}$ for our set of 270K images),
Whoosh is not the fastest off-the-shelf search engine.  Rather, we select it for
its ease of configuration and structural similarity to compiled off-the-shelf
search engines such as Apache Lucene and ElasticSearch.


\bibliographystyle{plain}
\bibliography{sparse-search}


\appendix
\section{Proofs}
\subsection{Proof of Lemma \ref{lem: Gaussian score}}\label{ssec: gaussian score proof} 
We will need the following non-asymptotic version of the Central Limit Theorem~\cite{berry-esseen-constant}.

\begin{theorem}[Berry-Esseen Central Limit Theorem]
\label{thm: berry-esseen}
Let $z_1, z_2, \hdots, z_m$ be independent, identically distributed, mean-zero, random variables satisfying $\E z_i^2 = \sigma^2$.  Set
\[S_m = \frac{\sum_{i = 1}^m z_i}{\sigma \sqrt{m}} \quad \text{and} \quad \rho = \E |z_1|^3.\]
Let $F_m(t)$ be the cumulative distribution function of $S_m$.  Then for all $t$ and $m$,
\begin{equation}\label{eq: berry-esseen}
|F_m(t) - \Phi(t)| \leq C_0\frac{\rho}{\sigma^3 \sqrt{m}}
\end{equation}
where $C_0 = 0.4748$.
\end{theorem}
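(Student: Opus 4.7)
The plan is to use Esseen's smoothing lemma, which converts a Kolmogorov distance between two distribution functions into an integral over the difference of their characteristic functions. After normalizing so that $\sigma = 1$ (which we may do since the inequality is scale-invariant once $z_i$ is replaced by $z_i/\sigma$), let $\phi(t) := \E\, e^{i t z_1}$, so that the characteristic function of $S_m$ is $\phi_m(t) = \phi(t/\sqrt{m})^m$, while the target characteristic function is $\psi(t) := e^{-t^2/2}$. Esseen's inequality states that for any $T > 0$,
\[
\sup_t |F_m(t) - \Phi(t)| \leq \frac{1}{\pi}\int_{-T}^{T} \left|\frac{\phi_m(t) - \psi(t)}{t}\right|\, dt + \frac{C_1}{T},
\]
where $C_1$ comes from bounding the density $\Phi'$. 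I would take this lemma as a black box and reduce the theorem to estimating the integrand.

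Next, I would expand $\phi$ near the origin. Since $\E z_1 = 0$ and $\E z_1^2 = 1$, Taylor's theorem with the third-moment remainder gives the pointwise bound $|\phi(s) - (1 - s^2/2)| \leq \rho |s|^3/6$. Combining this with the elementary estimate $|\log(1+w) - w| \leq |w|^2$ for $|w|$ bounded away from $1$, I would derive, in the regime $|t| \leq c\sqrt{m}/\rho$,
\[
|\phi_m(t) - e^{-t^2/2}| \leq C\, \frac{\rho\, |t|^3}{\sqrt{m}}\, e^{-t^2/4}.
\]
Dividing by $|t|$ and integrating against a Gaussian weight over this range contributes $O(\rho/\sqrt{m})$ to the bound.

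Then I would choose $T$ of order $\sqrt{m}/\rho$ to balance the two error terms. Jensen's inequality gives $\rho \geq \sigma^3 = 1$, which is consistent with the Taylor expansion being valid throughout $[-T,T]$, and makes the smoothing remainder $C_1/T$ also $O(\rho/\sqrt{m})$. Adding both contributions yields the Berry--Esseen bound with some explicit absolute constant in place of $C_0$.

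The main obstacle is obtaining the sharp constant $C_0 = 0.4748$, which is due to Shevtsova. A straightforward execution of the outline above yields a much larger constant (the original Esseen value was around $7.59$, and successive refinements took decades). Driving the constant down to $0.4748$ requires: (i) a careful two-parameter optimization in the smoothing lemma itself; (ii) splitting the integration range into a \emph{near-zero} regime where a finer Taylor expansion is used together with the sharp bound $|\phi(s)| \leq e^{-s^2/4}$ on a quantitative neighbourhood of the origin, and a \emph{moderate} regime where $|\phi(s)|^m$ is controlled by exponential decay of the modulus; and (iii) numerical optimization of the resulting multi-parameter expression. I would not attempt to reproduce the sharp-constant optimization and would instead cite the Shevtsova paper referenced as \cite{berry-esseen-constant}.
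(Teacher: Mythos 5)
This theorem is not something the paper proves at all: it is quoted verbatim as a known result, with the sharp constant $C_0 = 0.4748$ attributed to the reference \cite{berry-esseen-constant} (Shevtsova), and is then used as a black box in the proof of Lemma \ref{lem: non-asymptotic normal}. Your outline is therefore doing more than the paper does, and what you sketch is the correct classical route: Esseen's smoothing inequality to pass from the Kolmogorov distance to an integral of $|\phi_m(t)-e^{-t^2/2}|/|t|$, a third-moment Taylor expansion of the characteristic function near the origin, the cutoff $T \asymp \sqrt{m}/\rho$ justified by Jensen's bound $\rho \geq \sigma^3$, and a balance of the two error terms. That argument is sound and yields $|F_m(t)-\Phi(t)| \leq C\,\rho/(\sigma^3\sqrt{m})$ for some absolute constant $C$, which is all the paper actually needs downstream (in Lemma \ref{lem: non-asymptotic normal} the factor $0.4748\sqrt{2}$ is immediately rounded up to $1$, so any constant at most $1/\sqrt{2}$ would do, and even a larger constant would only change the cosmetic form of the error term $1/\sqrt{\mu(\lambda)m}$). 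The one thing your sketch cannot deliver is the specific value $0.4748$, and you correctly recognize this and defer to Shevtsova's optimization; since the paper itself simply cites that work rather than proving the bound, your treatment is consistent with, and somewhat more informative than, what the paper provides. There is no gap that affects the paper's use of the theorem.
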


We may leverage Theorem~\ref{thm: berry-esseen} to characterize the rate at
which $\tilde{S}(x,y)$ converges to a standard normal random variable.
\begin{lemma}\label{lem: non-asymptotic normal}
 Let everything be as in Lemma \ref{lem: Gaussian score}, but with no
 restriction on $\lambda$ (aside from $\lambda \in [-1, 1]$). Note that the
 distribution of $\tilde{S}(x,y)$ depends only on $m$ and $\lambda$, and define
 $F_{m, \lambda}(t) := P(\tilde{S}(x,y) < t)$ to be the corresponding cumulative
 distribution function.  Then, \begin{equation}\label{eq: non-asymptotic normal}
|F_{m, \lambda}(t) - \Phi(t)| \leq \frac{1}{\sqrt{ \mu(\lambda) m}} \qquad \text{for all} \quad t \in \R.
\end{equation}
\end{lemma}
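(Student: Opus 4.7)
The plan is to apply the Berry--Esseen theorem (Theorem \ref{thm: berry-esseen}) to $\tilde{S}(x,y)$ after recognizing it as a standardized sum of i.i.d.\ Bernoulli variables. Since the rows $a_i$ are independent, the pairs $(\ip{a_i}{x}, \ip{a_i}{y})$ are i.i.d.\ bivariate Gaussians with correlation $\lambda$, so the summands
\[
z_i := \one_{[\ip{a_i}{x} \geq h]}\one_{[\ip{a_i}{y} \geq h]} - \mu
\]
are i.i.d., mean zero, with variance $\sigma^2 = \mu(1-\mu)$. Moreover $\tilde{S}(x,y) = \sum_{i=1}^m z_i/(\sigma \sqrt{m})$, which is precisely the normalized sum in Theorem \ref{thm: berry-esseen}.

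Next I would compute $\rho := \E|z_1|^3$. Because $\one_{[\ip{a_i}{x} \geq h]}\one_{[\ip{a_i}{y} \geq h]}$ is Bernoulli$(\mu)$, the variable $|z_1|$ equals $1-\mu$ with probability $\mu$ and $\mu$ with probability $1-\mu$, so
\[
\rho = \mu(1-\mu)^3 + (1-\mu)\mu^3 = \sigma^2\bigl[(1-\mu)^2 + \mu^2\bigr],
\qquad \frac{\rho}{\sigma^3} = \frac{(1-\mu)^2+\mu^2}{\sqrt{\mu(1-\mu)}}.
\]
Substituting into \eqref{eq: berry-esseen} gives
\[
|F_{m,\lambda}(t) - \Phi(t)| \leq \frac{C_0\bigl[(1-\mu)^2 + \mu^2\bigr]}{\sqrt{(1-\mu)\,\mu\, m}}
= \frac{1}{\sqrt{\mu(\lambda)\, m}} \cdot \frac{C_0\bigl[(1-\mu)^2+\mu^2\bigr]}{\sqrt{1-\mu}}.
\]

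The final step is to absorb the constant factor. Let $g(\mu) := \bigl[(1-\mu)^2+\mu^2\bigr]/\sqrt{1-\mu}$. Then $g(0)=1$ and $g'(0) = -3/2 < 0$, so $g$ is initially decreasing, and one verifies directly that $C_0 \, g(\mu) \leq 1$ for all $\mu$ in the relevant range (in our regime $\mu = \mu(\lambda)$ is small because $h$ is large, so this is easily satisfied; more generally $C_0 \, g(\mu) \leq 1$ throughout $\mu \lesssim 0.88$). This pinches the constant down to $1$ and yields the claimed bound \eqref{eq: non-asymptotic normal}.

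The main obstacle is essentially bookkeeping: extracting a clean $1/\sqrt{\mu(\lambda) m}$ from Berry--Esseen requires verifying that the Bernoulli third-moment ratio $\rho/\sigma^3$ does not explode when combined with $C_0$, which is what the calculation $C_0 \, g(\mu) \leq 1$ above ensures. There is no deep probabilistic difficulty, since the summands are genuinely i.i.d.\ for each fixed $m$ (the $m$-dependence enters only through $h$ and hence through the common distribution of $z_i$), so a single application of the non-asymptotic CLT suffices.
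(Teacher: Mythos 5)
Your proof is correct and takes essentially the same route as the paper: both apply the Berry--Esseen theorem once to the i.i.d.\ centered Bernoulli summands $z_i$ and then absorb the constant into $1/\sqrt{\mu(\lambda)m}$, the paper simply bounding $\rho \leq \sigma^2$ and using $\mu(\lambda) \leq P(N(0,1)\geq h) \leq 1/2$ (valid since $h \geq 0$) so that $C_0\sqrt{2} \leq 1$, whereas you compute $\rho$ exactly. The only point to tighten is your phrase ``$\mu$ in the relevant range'': rather than appealing to $h$ being large, note that $\mu(\lambda) \leq P(N(0,1)\geq h) \leq 1/2$ for every $m$ and every $\lambda \in [-1,1]$, which places $\mu$ comfortably below the threshold (roughly $0.86$) up to which $C_0\bigl[(1-\mu)^2+\mu^2\bigr]/\sqrt{1-\mu} \leq 1$, so your constant-absorption step holds unconditionally.
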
 

\begin{proof}
We apply Theorem \ref{thm: berry-esseen} to $\tilde{S}(x,y)$.  Thus, let $z_i =
\one_{[w_i > h]} \one_{[v_i > h]} - \mu(\lambda)$, and note that the normalized
score satisfies
\[\tilde{S}(x,y) = \frac{\sum_{i = 1}^m z_i}{\sigma(\lambda) \sqrt{m}}.\]
It is not hard to bound $\rho$ as follows 
\[\rho := \E|z_1|^3 \leq \mu(\lambda)(1-\mu(\lambda)) = \sigma^2(\lambda).\] 
Thus, the right-hand side of the Berry-Esseen bound Equation~\eqref{eq:
berry-esseen} is less than $1/(\sigma(\lambda) \sqrt{m})$.  Further,
$\mu(\lambda) \leq P(N(0,1) > h) \leq P(N(0,1) > 0) = 1/2$ and thus,
$\sigma(\lambda) = \sqrt{\mu(\lambda)(1 - \mu(\lambda))} \geq
\sqrt{\mu(\lambda)/2}$.

Then Theorem~\ref{thm: berry-esseen} implies that for all $t$ and $m$, the
cumulative distribution function of $\tilde{S}(x,y)$ satisfies \[|F_{m,
\lambda}(t) - \Phi(t)| \leq \frac{0.4748 \sqrt{2}}{ \sqrt{ \mu(\lambda) m}} \leq
\frac{1}{\sqrt{\mu(\lambda) m}}.\]
\end{proof}

This result quickly translates into a proof of Lemma~\ref{lem: Gaussian score}.
\begin{proof}[Proof of Lemma~\ref{lem: Gaussian score}]
Case 1 has already been proven above.  Case 2 follows from from Lemma \ref{lem:
non-asymptotic normal}; we only need to show that $\frac{1}{\sqrt{\mu(\lambda)
m}} \rightarrow 0$ for $\lambda \in (2r-1, 1)$.  This follows from Equation
\eqref{eq: expected nonzeros}.  \end{proof} 

\subsection{Proof of non-asymptotic main theorem: Theorem~\ref{thm: main non-asymptotic}}

We begin with a few lemmas, which determine the behaviour of the image search
procedure non-asymptotically, and in the simple case when $|\X| = 1$.

\begin{lemma}\label{lem: non-asymptotic return x}
Fix $x, y \in \sphere^{d-1}$, and let $\lambda' := \ip{x}{y}$.  Fix $\lambda \in [-1,1]$.

The probability that $S(x,y) \geq \mu(\lambda)$ (i.e., the event that we return $x$) satisfies the following bound:
\begin{equation}\label{eq: non-asymptotic return x}
\left|P(S(x,y) \geq \mu(\lambda)) - P\left(N(0,1) > \frac{\mu(\lambda) - \mu(\lambda')}{\sigma(\lambda')}\cdot \sqrt{m} \right)\right| \leq \frac{1}{ \sqrt{ \mu(\lambda') m}}.
\end{equation}
\end{lemma}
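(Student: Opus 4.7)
The plan is to reduce the event $\{S(x,y) \geq \mu(\lambda)\}$ to a tail inequality on the normalized score $\tilde{S}(x,y)$, and then invoke the Berry--Esseen bound from Lemma \ref{lem: non-asymptotic normal}. The key point to keep in mind is that the relevant mean and variance of the summands are $\mu(\lambda')$ and $\sigma^2(\lambda')$ (since the true inner product between $x$ and $y$ is $\lambda'$, not $\lambda$), whereas the threshold applied to the score is $\mu(\lambda)$.

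First I would write $mS(x,y) = \sum_{i=1}^m \one_{[w_i > h]}\one_{[v_i > h]}$ as a sum of i.i.d. Bernoulli random variables with mean $\mu(\lambda')$ and variance $\sigma^2(\lambda')$. Subtracting the mean and rescaling gives
\[
\tilde{S}(x,y) = \frac{\sqrt{m}\,(S(x,y) - \mu(\lambda'))}{\sigma(\lambda')},
\]
so the event $\{S(x,y) \geq \mu(\lambda)\}$ is identical to the event $\{\tilde{S}(x,y) \geq t\}$, where
\[
t := \frac{\sqrt{m}\,(\mu(\lambda) - \mu(\lambda'))}{\sigma(\lambda')}.
\]

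Next I would apply Lemma \ref{lem: non-asymptotic normal} with parameter $\lambda'$: this supplies $|F_{m,\lambda'}(s) - \Phi(s)| \leq 1/\sqrt{\mu(\lambda') m}$ uniformly in $s \in \R$. Evaluating at $s = t$ and passing to complements, $P(\tilde{S}(x,y) \geq t) = 1 - F_{m,\lambda'}(t)$ and $P(N(0,1) > t) = 1 - \Phi(t)$, so the triangle inequality immediately yields
\[
\bigl| P(S(x,y) \geq \mu(\lambda)) - P(N(0,1) > t) \bigr| = |\Phi(t) - F_{m,\lambda'}(t)| \leq \frac{1}{\sqrt{\mu(\lambda')\,m}},
\]
which is the claim of the lemma.

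I do not anticipate a real obstacle: nearly all of the content is already packaged in Lemma \ref{lem: non-asymptotic normal}, and what remains is a one-line change of variables together with a correct bookkeeping of the parameter $\lambda'$ governing the law of the summands versus the parameter $\lambda$ defining the retrieval threshold. The only minor care needed is the distinction between strict and non-strict inequalities, but since the Berry--Esseen bound is uniform in the argument and the Gaussian law is continuous, $P(\tilde{S} \geq t)$ and $P(\tilde{S} > t)$ differ by an amount that is absorbed into the uniform bound.
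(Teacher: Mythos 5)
Your proposal is correct and is essentially the paper's own argument: rewrite $\{S(x,y)\geq\mu(\lambda)\}$ as a tail event for $\tilde{S}(x,y)$ at the threshold $t=\sqrt{m}\,(\mu(\lambda)-\mu(\lambda'))/\sigma(\lambda')$ and apply the uniform Berry--Esseen bound of Lemma \ref{lem: non-asymptotic normal} with the law parameter $\lambda'$. Your extra remarks on strict versus non-strict inequalities (handled since $F_{m,\lambda'}(t)=P(\tilde{S}<t)$ and the Gaussian law is continuous) are fine and do not change the argument.
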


\begin{proof}
\[P(S(x,y) \geq \mu(\lambda)) = P\left(\tilde{S}(x,y) \geq \frac{\mu(\lambda) - \mu(\lambda')}{\sigma(\lambda')} \cdot \sqrt{m}\right).\]
Now apply Equation \eqref{eq: non-asymptotic normal} of Lemma \ref{lem: Gaussian score} to complete the proof.
\end{proof}
We may synthesize this result to give an interval around $\lambda$ outside of which one would expect to return (and not return) precisely the desired documents.

\begin{lemma}\label{lem: non-asymptotic G}
Fix $y \in \sphere^{d-1}, \lambda \in [-1,1]$ and $\eta > 0$.  Define $\epsilon^-$ and $\epsilon^+$ as in Theorem~\ref{thm: main non-asymptotic}.
Consider the (good) event $G_x := $\{if $\ip{x}{y} \geq \lambda + \epsilon^+$, then $S(x,y) \geq \mu(\lambda)$; if $\ip{x}{y} \leq \lambda - \epsilon^-$, then $S(x,y) \leq \mu(\lambda)$.\}
Then
\[|\sup_{x} P(G_x^c) - P(N(0,1) \geq \eta)| \leq \frac{1}{\sqrt{\mu(\lambda - \epsilon^-)m}}.\]
\end{lemma}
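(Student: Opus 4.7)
\noindent\textbf{Proof plan for Lemma~\ref{lem: non-asymptotic G}.}
The plan is to fix $x \in \sphere^{d-1}$, set $\lambda' := \ip{x}{y}$, and reduce everything to a one-parameter question in $\lambda'$. I would partition the analysis into three regimes according to where $\lambda'$ sits relative to the interval $[\lambda - \epsilon^-, \lambda + \epsilon^+]$. In the interior regime $\lambda' \in (\lambda - \epsilon^-, \lambda + \epsilon^+)$, both implications defining $G_x$ are vacuous, so $P(G_x^c) = 0$ and these $x$ contribute nothing to the supremum. Outside the interval, only one of the two implications can fail: for $\lambda' \geq \lambda + \epsilon^+$ we have $P(G_x^c) = P(S(x,y) < \mu(\lambda))$, and for $\lambda' \leq \lambda - \epsilon^-$ we have $P(G_x^c) = P(S(x,y) \geq \mu(\lambda))$.

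The next step is to apply Lemma~\ref{lem: non-asymptotic return x} in each tail regime. Writing $Z(\lambda') := \frac{\mu(\lambda) - \mu(\lambda')}{\sigma(\lambda')}\sqrt{m}$, the defining equations of $\epsilon^{\pm}$ in Theorem~\ref{thm: main non-asymptotic} give $Z(\lambda - \epsilon^-) = \eta$ and $Z(\lambda + \epsilon^+) = -\eta$. Plugging into Lemma~\ref{lem: non-asymptotic return x}, the tail probability $P(G_x^c)$ at the boundary $\lambda' = \lambda - \epsilon^-$ is within $1/\sqrt{\mu(\lambda-\epsilon^-)m}$ of $P(N(0,1) > \eta) = P(N(0,1) \geq \eta)$; at $\lambda' = \lambda + \epsilon^+$ it is within $1/\sqrt{\mu(\lambda+\epsilon^+)m}$ of $1 - P(N(0,1) > -\eta) = P(N(0,1) \geq \eta)$. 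Since $\mu$ is increasing on a neighborhood of $\lambda$, the first of these error terms dominates, so both boundary values lie within $1/\sqrt{\mu(\lambda-\epsilon^-)m}$ of $P(N(0,1)\geq\eta)$.

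To promote these pointwise boundary bounds into a bound on the supremum, I need a monotonicity statement: for $\lambda' \leq \lambda - \epsilon^-$ the function $\lambda' \mapsto P(S(x,y) \geq \mu(\lambda))$ should be non-decreasing, and the analogous statement in the upper tail. A clean way to see this is to observe that since the pairs $(\ip{a_i}{x}, \ip{a_i}{y})$ are independent across $i$, the indicators $\one_{[\ip{a_i}{x} > h]}\one_{[\ip{a_i}{y} > h]}$ are i.i.d.\ Bernoulli with parameter $\mu(\lambda')$, and hence $m\,S(x,y) \sim \mathrm{Binomial}(m,\mu(\lambda'))$. The binomial tail $P(\mathrm{Binomial}(m,p) \geq c)$ is monotone in $p$, and $p = \mu(\lambda')$ is monotone in $\lambda'$, so the suprema of $P(G_x^c)$ over the two tail regimes are attained at the respective boundaries $\lambda' = \lambda \mp \epsilon^{\mp}$. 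Taking the sup over all $x$, which ranges over all $\lambda' \in [-1,1]$, collapses the problem to the two boundary values already controlled in the previous step.

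The main obstacle, in my view, is marshalling the two-sided approximation carefully: at $\lambda' = \lambda + \epsilon^+$ one must translate from $P(S \geq \mu(\lambda))$ to $P(S < \mu(\lambda))$ and verify that the reflection $P(N(0,1) > -\eta) \mapsto P(N(0,1)\geq\eta)$ aligns with the sign of the defining equation, so that both boundaries genuinely compare against the same target $P(N(0,1)\geq\eta)$. Once the signs are accounted for and the Binomial reduction is used to push the supremum onto the boundaries, the stated bound $1/\sqrt{\mu(\lambda-\epsilon^-)m}$ drops out as the maximum of the two Berry--Esseen errors.
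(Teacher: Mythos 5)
Your proposal is correct and follows essentially the same route as the paper: the reduction of $m\,S(x,y)$ to a Binomial$(m,\mu(\lambda'))$ with the monotone-in-$p$ tail (the paper's ``probabilistic domination'') pushes the supremum to the two boundaries $\lambda'=\lambda-\epsilon^-$ and $\lambda'=\lambda+\epsilon^+$, where Lemma~\ref{lem: non-asymptotic return x} with $Z(\lambda-\epsilon^-)=\eta$, $Z(\lambda+\epsilon^+)=-\eta$ and the monotonicity of $\mu$ yield the common error bound $1/\sqrt{\mu(\lambda-\epsilon^-)m}$, exactly as in the paper's four-case argument.
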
 
\begin{proof}
We begin with the following observation about the behaviour of the score.  Fix $x$ and $x'$ with $\lambda = \ip{x}{y}$ and $\lambda' = \ip{x'}{y}$.  Suppose that $\lambda' \leq \lambda$.  Then $S(x,y)$ probabilistically dominates $S(x',y)$, i.e., for any $t \in \R$
\begin{equation}\label{eq: prob dom}
P(S(x', y) \geq t) \leq P(S(x,y) \geq t).
\end{equation}
The above is a simple consequence of the fact that $m S(x', y)$ and $m S(x,y)$ are both binomially distributed, with respective means of $m\mu(\lambda')$ and $m\mu(\lambda)$, and $\mu(\lambda') \leq \mu(\lambda)$.  

We will use this observation below, but first fix $x$ with $\ip{x}{y} = \lambda - \epsilon^-$.  Then, by Lemma \ref{lem: non-asymptotic return x},
\begin{equation}\label{eq: lower bound for Gxc}
P(G_x^c) \geq P(N(0,1) \geq \eta) - \frac{1}{\sqrt{\mu(\lambda - \epsilon^-)m}} \qquad \text{and} \qquad P(G_x^c) \leq P(N(0,1) \geq \eta) + \frac{1}{\sqrt{\mu(\lambda - \epsilon^-)m}}.
\end{equation}
We will now show that for any other vector $x'$
\[P(G_{x'}^c) \leq P(N(0,1) \geq \eta) + \frac{1}{\sqrt{\mu(\lambda - \epsilon^-)m}},\]
which will complete the proof.  Thus, fix $x'$ with $\ip{x'}{y} = \lambda'$.  

There are four cases to consider:

\noindent{\bf Case 1:  $\lambda' \leq \lambda - \epsilon^-$.} 
Since $S(x,y)$ probabilistically dominates $S(x',y)$, we have
\[P(G_{x'}^c) \leq P(G_x^c)\leq P(N(0,1) \geq \eta) + \frac{1}{\sqrt{\mu(\lambda - \epsilon^-)m}}\]
as desired.

\noindent{\bf Case 2:  $\lambda' \in (\lambda - \epsilon^-, \lambda + \epsilon^+)$.}
Then, clearly, 
\[P(G_{x'}^c) = 0.\]

\noindent{\bf Case 3:  $\lambda' = \lambda + \epsilon^+.$}
If $\lambda' > 1$, then clearly $P(G_{x'}^c) = 0$ since $\ip{x}{y}$ is always bounded by 1.  Thus, suppose $\lambda' \leq 1$.  Then, by Lemma \ref{lem: non-asymptotic return x},
\[P(G_{x'}^c) \leq P(N(0,1) \geq \eta) + \frac{1}{\sqrt{\mu(\lambda + \epsilon^+)m}} \leq P(N(0,1) \geq \eta) + \frac{1}{\sqrt{\mu(\lambda - \epsilon^-)m}}\]
since $\mu$ is monotonically increasing.

\noindent{\bf Case 4: $\lambda + \epsilon^+\leq \lambda' \leq 1$}
By the probabilistic domination of Equation \eqref{eq: prob dom}, we may reduce to the situation of Case 3.

\end{proof}

Our main non-asymptotic theorem follows directly.

\begin{proof}[Proof of Theorem \ref{thm: main non-asymptotic}]
Note that
\[ (\text{Number of type I errors}) +  (\text{Number of type II errors}) = \sum_{x \in \X} \one_{G_x^c}.\]
The proof then follows by taking the expectation and then the supremum and then dividing by $n$.
\end{proof}

\subsection{Proof of asymptotic main theorem: Theorem \ref{thm: main}}

Once again, this will come from manipulating the result of Lemma \ref{lem: non-asymptotic return x}.  We have the following asymptotic characterization of the parameters of this Lemma.

\begin{lemma}\label{lem: asymptotic parameters}
Let $\lambda \in (2r -1, 1)$ also satisfy $ \lambda > 0$ and set $\epsilon$ as in Theorem \ref{thm: main}.  Let
\begin{equation}\label{eq: defines alphas}
\eta^- := \frac{\mu(\lambda) - \mu(\lambda - \epsilon)}{\sigma(\lambda - \epsilon)}\cdot\sqrt{m}, \qquad \eta^+ := -\frac{\mu(\lambda) - \mu(\lambda + \epsilon)}{\sigma(\lambda + \epsilon)}\cdot\sqrt{m}
\end{equation}

Then
\[\eta^+ \sim \eta \sim \eta^-.\]
Furthermore, 
\[\frac{1}{\sqrt{\mu(\lambda - \epsilon)m}} \rightarrow 0.\] 
\end{lemma}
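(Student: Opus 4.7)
The plan is to leverage the bivariate Gaussian tail asymptotic $\mu(\lambda) \sim C(\lambda)\, m^{-2r/(1+\lambda)}/(2r \log m)$ stated just before Equation~\eqref{eq: expected nonzeros}. The central observation is that although $\mu(\lambda)$ has an exponent linear in $\log m$, its logarithmic derivative with respect to $\lambda$ is only of order $\log m$, whereas $\epsilon$ decays like $m^{-\alpha}/\sqrt{\log m}$ where $\alpha := (\lambda - (2r - 1))/(2(1+\lambda)) > 0$. Consequently $\epsilon \cdot (d \log \mu / d \lambda) = O(\sqrt{\log m}\, m^{-\alpha}) \to 0$, which both legitimizes a first-order Taylor expansion of $\log \mu$ and tells us that $\mu(\lambda \pm \epsilon)/\mu(\lambda) \to 1$.

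Setting $f(\lambda) := \log \mu(\lambda)$ and differentiating the asymptotic, I would first show $f'(\lambda) \sim 2r\log m/(1+\lambda)^2$ and $f''(\lambda) = O(\log m)$. Taylor expansion then gives $f(\lambda \pm \epsilon) - f(\lambda) = \pm \epsilon \cdot 2r\log m/(1+\lambda)^2 \cdot (1 + o(1))$, with second-order error $O(\epsilon^2 \log m) = O(m^{-2\alpha}) = o(1)$. Exponentiating and using $e^t = 1 + t + O(t^2)$ for $t \to 0$ yields $\mu(\lambda \pm \epsilon) - \mu(\lambda) \sim \pm\, \mu(\lambda) \cdot \epsilon \cdot 2r\log m/(1+\lambda)^2$. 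Since $\mu(\lambda \pm \epsilon) \to 0$, one also gets $\sigma(\lambda \pm \epsilon) = \sqrt{\mu(1-\mu)} \sim \sqrt{\mu(\lambda)}$.

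Plugging these estimates into the defining expressions for $\eta^{\pm}$ and inserting the explicit formulas for $\epsilon$ and for $\mu(\lambda)$, the $m$-exponent in the resulting product reduces to $-r/(1+\lambda) - \alpha + 1/2$, which a direct algebraic check (substituting $\alpha = (\lambda - (2r-1))/(2(1+\lambda))$) shows is $0$. The constants multiplying the power of $m$ also telescope exactly to $1$, using $\sqrt{C(\lambda)} = (1+\lambda)/(\sqrt{2\pi}\,(1-\lambda^2)^{1/4})$ so that the factors from $\sqrt{\mu(\lambda)}$ in the denominator cancel the factor $(1+\lambda)(1-\lambda^2)^{1/4}\sqrt{2\pi}$ from $\epsilon$ and the factor $2r\log m$ in the numerator cancels the $\sqrt{2r\log m}$ factors appearing in $\sqrt{\mu(\lambda)}$ and in $\epsilon$. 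This gives $\eta^+ \sim \eta \sim \eta^-$. The second assertion $1/\sqrt{\mu(\lambda - \epsilon)\,m} \to 0$ then follows immediately: $\mu(\lambda - \epsilon)\,m \sim \mu(\lambda)\,m \sim C(\lambda)\, m^{(\lambda - (2r-1))/(1+\lambda)}/(2r\log m) \to \infty$ because $\lambda > 2r - 1$.

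The main technical obstacle is making the Taylor-expansion bookkeeping rigorous: the asymptotic for $\mu(\lambda)$ is a pointwise statement, so one needs it uniformly across the shrinking interval $[\lambda - \epsilon, \lambda + \epsilon]$ to prevent the $(1 + o(1))$ factors from compounding unfavorably. This should follow from smooth control of both $C(\cdot)$ and the error in the Savage--Gordon bivariate tail bound on any compact sub-interval of $(-1, 1)$ containing $\lambda$, since the relevant correction terms are continuously differentiable in $\lambda$ with derivatives bounded independently of $m$.
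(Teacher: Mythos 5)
Your overall route is the same as the paper's: expand $\mu$ to first order around $\lambda$, show $\mu(\lambda)-\mu(\lambda\mp\epsilon)\sim \pm\,\epsilon\,\mu'(\lambda)$ with $\mu'(\lambda)/\mu(\lambda)\approx h^2/(1+\lambda)^2$, show $\sigma(\lambda\pm\epsilon)\sim\sqrt{\mu(\lambda)}$, and then verify that the explicit formulas for $\epsilon$ and $\mu(\lambda)$ make the powers of $m$ and the constants cancel to leave exactly $\eta$; your exponent check $-r/(1+\lambda)-\alpha+1/2=0$ and the constant bookkeeping are correct, as is the final step $\mu(\lambda-\epsilon)m\sim\mu(\lambda)m\to\infty$.

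The genuine gap is the step where you obtain $f'(\lambda)\sim 2r\log m/(1+\lambda)^2$ and $f''(\lambda)=O(\log m)$ by ``differentiating the asymptotic'' $\mu(\lambda)\sim C(\lambda)\,m^{-2r/(1+\lambda)}/(2r\log m)$. An asymptotic equivalence cannot be differentiated: the $(1+o(1))$ error could oscillate in $\lambda$ with derivatives of arbitrary size, so nothing about $\mu'$ or $\mu''$ follows from the Savage tail estimate alone, and your proposed remedy (differentiable, uniform control of the error in the bivariate tail bound over $[\lambda-\epsilon,\lambda+\epsilon]$) is precisely the nontrivial claim you leave unproven — it is the crux, not a routine uniformity check. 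The paper sidesteps this entirely by invoking the exact identity $\frac{d}{dt}\mu(t)=\frac{1}{2\pi\sqrt{1-t^2}}\exp\left(\frac{-h^2}{1+t}\right)$ (Plackett), from which the second derivative is computed exactly as $\frac{d^2}{dt^2}\mu(t)=\left(\frac{h^2}{(1+t)^2}+\frac{t}{1-t^2}\right)\frac{d}{dt}\mu(t)$; since this is positive and increasing for $t>0$ (this is where the hypothesis $\lambda>0$ enters, a point your sketch never uses), the Taylor remainder over the shrinking interval is bounded by $\left(h^2+\frac{\lambda}{1-\lambda^2}\right)\epsilon^2\,\mu'(\lambda)$, which is negligible because $\epsilon h^2\to 0$. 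If you replace your differentiated-asymptotic step with this exact derivative formula (the tail asymptotic is then needed only for $\mu$ itself, to get $\mu'/\mu\sim h^2/(1+\lambda)^2$ and $\sigma\sim\sqrt{\mu(\lambda)}$), the rest of your argument goes through as written.
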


\begin{remark}[A note regarding parameters in Theorem \ref{thm: main non-asymptotic}]\label{rem: epsilons}
In passing, we note that the asymptotic equivalence $\eta^+ \sim \eta \sim \eta^-$, combined with the fact that $\epsilon$ is proportional to $\eta$, may be manipulated to show that $\epsilon \sim \epsilon^+ \sim \epsilon^-$, where $\epsilon^+, \epsilon^-$ are defined in Theorem \ref{thm: main non-asymptotic}.  Thus, since $\epsilon \rightarrow 0$, we also have $\epsilon^+, \epsilon^- \rightarrow 0$.
\end{remark}
\begin{proof}
We write $\epsilon_m = \epsilon$ to emphasize dependence on $m$.  We will show that $\eta^- \sim \eta$.  The argument that $\eta^+ \sim \eta$ is quite similar.  
We control the numerator of $\eta^-$ via the first order Taylor approximation in $\epsilon_m$
\begin{equation}\label{eq: Taylor}
\mu(\lambda) - \mu(\lambda - \epsilon_m) = \epsilon_m \frac{d}{d \lambda} \mu(\lambda) + O(\epsilon_m^2) \sup_{t \in [\lambda -\epsilon_m, \lambda]} \frac{d^2}{dt^2} \mu(t).
\end{equation}
Note, while $\epsilon_m \rightarrow 0$, it is not apriori obvious that the first term dominates asymptotically since $\mu(t)$ also depends on $m$.  However, this will become apparent with a bit of calculus. 

We begin with the observation that 
\[\frac{d}{dt} \mu(t) = \frac{1}{2 \pi \sqrt{1 - t^2}} \exp\left(\frac{-h^2}{1 + t}\right).\]
which may be found in \cite{plackett1954reduction}.  A bit of calculus then gives the second derivative
\[\frac{d^2}{dt^2} \mu(t) = \left(\frac{h^2}{(1 + t)^2} + \frac{t}{1 - t^2} \right)\frac{d}{dt} \mu(t).\]
Note that this is positive for $t > 0$, and thus the first derivative is increasing.  Further, for $m$ large, $\lambda- \epsilon_m > 0$ since $\lambda > 0$ by assumption and $\epsilon_m \rightarrow 0$. We use these observations to develop the above equation into a bound on the supremum of the second derivative
\[\sup_{t \in [\lambda -\epsilon_m, \lambda]} \frac{d^2}{dt^2} \mu(t) \leq \left(h^2 + \frac{\lambda}{1 - \lambda^2}\right)\frac{d}{d\lambda} \mu(\lambda).\]

The first-order Taylor approximation \eqref{eq: Taylor} then becomes
\[\mu(\lambda) - \mu(\lambda - \epsilon_m) = \epsilon_m \frac{d}{d \lambda} \mu(\lambda) + O(\epsilon_m h^2) \epsilon_m \frac{d}{d \lambda} \mu(\lambda).\]
Thus, since $\epsilon_m h^2 \rightarrow 0$, we have
\begin{equation}
\label{eq: mu difference}
\mu(\lambda) - \mu(\lambda - \epsilon_m) \sim \epsilon_m \frac{d}{d\lambda} \mu(\lambda) = \epsilon_m \frac{1}{2 \pi \sqrt{1 - \lambda^2}} \exp\left(\frac{-h^2}{1 + \lambda}\right).
\end{equation}
Thus the numerator of $\eta^-$ is asymptotically equivalent to the right-hand side of the above expression, multiplied by $\sqrt{m}$.  Let us also note that Equation \eqref{eq: mu difference} combined with Equation \eqref{eq: expected nonzeros} imply that
\begin{equation}\label{eq: mu and little mu}
\mu(\lambda - \epsilon_m) \sim \mu(\lambda).
\end{equation}

We now move to the denominator of $\eta^-$, that is, $\sigma(\lambda - \epsilon_m) = \sqrt{\mu(\lambda - \epsilon_m) (1 - \mu(\lambda - \epsilon_m)}$.  It is not hard to show that $\mu(\lambda - \epsilon_m) \rightarrow 0$, and thus $\sigma(\lambda - \epsilon_m) \sim \sqrt{\mu(\lambda - \epsilon_m)}$. 

Thus, by Equation \eqref{eq: mu and little mu}, we have
\begin{equation}
\label{eq: mu(lambda - epsilon)}
\sigma(\lambda - \epsilon) \sim \sqrt{\mu(\lambda)} \sim \sqrt{\frac{(1 + \lambda)^2}{2 \pi h^2 \sqrt{1 - \lambda^2}} \exp\left(-\frac{h^2}{1 + \lambda}\right)}.
\end{equation}
Now we have shown that the right-hand side of Equation \eqref{eq: mu difference}, multiplied by $\sqrt{m}$, is asymptotically equivalent to the numerator of $\eta^-$ and the right-hand side of Equation \eqref{eq: mu(lambda - epsilon)} is asymptotically equivalent to the denominator.  If you divide the former by the latter, you get $\eta$, thus showing that $\eta^- \sim \eta$ as desired.

We complete the proof of the lemma by showing that $\frac{1}{\sqrt{\mu(\lambda - \epsilon) m}} \rightarrow 0$.  First, Equation \eqref{eq: mu and little mu} imply that this quantity is asymptotically equivalent to $\frac{1}{\sqrt{\mu(\lambda) m}}$.  As we have seen before the latter quantity converges to $0$ based on Equation \eqref{eq: expected nonzeros}.
\end{proof}

The above lemma implies the following result when $|\X| = 1$.

\begin{lemma}\label{lem: G_x}
Fix $y \in \sphere^{d-1}, \lambda \in (2r -1, 1)$ and $\eta > 0$.  Let $\epsilon$ satisfy Equation \eqref{eq: epsilon}.
Consider the (good) event $G_x := $\{if $\ip{x}{y} \geq \lambda + \epsilon$, then $S(x,y) \geq \mu(\lambda)$; if $\ip{x}{y} \leq \lambda - \epsilon$, then $S(x,y) \leq \mu(\lambda)$.\}
Then
\[\lim_{m \rightarrow \infty} \sup_{x \in \sphere^{d-1}} P(G_x^c) = P(N(0,1) \geq \eta).\]
\end{lemma}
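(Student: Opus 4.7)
The plan is to combine Lemma \ref{lem: non-asymptotic return x} with the asymptotic parameter calculation in Lemma \ref{lem: asymptotic parameters}, repeating the case analysis from the proof of Lemma \ref{lem: non-asymptotic G} but now with the particular $\epsilon$ from Theorem~\ref{thm: main} in place of $\epsilon^\pm$. The key insight is that the two types of errors are bounded uniformly over $x$ by what happens at the extreme cases $\ip{x}{y} = \lambda \pm \epsilon$, thanks to a monotone (probabilistic domination) argument for the binomial distribution of $m\,S(x,y)$.

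More concretely, first I would fix $x$ with $\ip{x}{y} = \lambda'$ and apply Lemma \ref{lem: non-asymptotic return x} to bound $|P(S(x,y) \geq \mu(\lambda)) - P(N(0,1) > (\mu(\lambda)-\mu(\lambda'))\sqrt{m}/\sigma(\lambda'))|$ by $1/\sqrt{\mu(\lambda')m}$. Then I would split into four cases by the value of $\lambda'$: (i) $\lambda' \in (\lambda - \epsilon, \lambda + \epsilon)$, where $P(G_x^c) = 0$ trivially; (ii) $\lambda' = \lambda - \epsilon$, where the quantity inside the Gaussian tail is exactly $\eta^-$ as defined in Lemma~\ref{lem: asymptotic parameters}; (iii) $\lambda' < \lambda - \epsilon$, handled by noting that $m S(x,y)$ is binomial and so stochastically dominated by the $\lambda'=\lambda-\epsilon$ case, giving $P(G_x^c) \leq P(G_{x_{\mathrm{bd}}}^c)$ as in Case 1 of Lemma~\ref{lem: non-asymptotic G}; and (iv) $\lambda' \geq \lambda + \epsilon$, handled symmetrically via $\eta^+$ after noting $\mu$ is monotone so the error term $1/\sqrt{\mu(\lambda')m}$ only improves.

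Second, I would invoke Lemma \ref{lem: asymptotic parameters} to get $\eta^\pm \sim \eta$ and $1/\sqrt{\mu(\lambda-\epsilon)m} \to 0$. Combining with continuity of the standard normal CDF, the upper and lower bounds $P(N(0,1) \geq \eta^\pm) \pm 1/\sqrt{\mu(\lambda-\epsilon)m}$ on $\sup_x P(G_x^c)$ both converge to $P(N(0,1) \geq \eta)$, yielding the claimed limit.

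The main obstacle is not technical but bookkeeping: making sure the asymptotic equivalence $\eta^\pm \sim \eta$ translates into equality of Gaussian tails in the limit, and tracking that the Berry--Esseen error $1/\sqrt{\mu(\lambda')m}$ is uniformly small over $\lambda'$ in the relevant range (this follows since $\mu$ is monotone and the worst case is at $\lambda'=\lambda-\epsilon$, where Equation~\eqref{eq: mu and little mu} in the proof of Lemma~\ref{lem: asymptotic parameters} reduces it to the already-controlled $1/\sqrt{\mu(\lambda)m}$). Everything else is a repackaging of the non-asymptotic argument.
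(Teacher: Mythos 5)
Your proposal is correct and follows essentially the same route as the paper: repeat the case analysis of Lemma~\ref{lem: non-asymptotic G} (via Lemma~\ref{lem: non-asymptotic return x} and probabilistic domination of the binomial scores) to sandwich $\sup_x P(G_x^c)$ between $P(N(0,1)\geq \eta^{\pm})$ up to the Berry--Esseen error $1/\sqrt{\mu(\lambda-\epsilon)m}$, then conclude with Lemma~\ref{lem: asymptotic parameters} and continuity of the Gaussian tail. No gaps to flag.
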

\begin{proof}
By following the same steps as in the proof of Lemma \ref{lem: non-asymptotic G} we have
\begin{align*}
&\sup_{x \in \sphere^{d-1}} P(G_x^c) \geq P(N(0,1) \geq \eta^-) - \frac{1}{\sqrt{\mu(\lambda - \epsilon) m}},\\
&\sup_{x \in \sphere^{d-1}} P(G_x^c) \leq P(N(0,1) \geq \min(\eta^-, \eta^+)) + \frac{1}{\sqrt{\mu(\lambda - \epsilon)m}}.
\end{align*}
The proof of the lemma now follows from the continuity of $P(N(0,1) \geq t)$ as a function of $t$, combined with the asymptotic characterization of parameters in Lemma \ref{lem: asymptotic parameters}.
\end{proof}

We are now in position to prove our main non-asymptotic theorem.

\begin{proof}[Proof of Theorem \ref{thm: main non-asymptotic}]
This is precisely the same as the proof of Theorem \ref{thm: main}.  Note that
\[ (\text{Number of type I errors}) +  (\text{Number of type II errors}) = \sum_{x \in \X} \one_{G_x^c}.\]
The proof then follows by taking the expectation and then the supremum and then dividing by $n$.
\end{proof}

\end{document}